\documentclass[10pt, letterpaper, conference]{IEEEtran}
\IEEEoverridecommandlockouts 
\usepackage[utf8]{inputenc}
\usepackage{subfigure}
\usepackage{amsmath}
\usepackage{mathrsfs}
\usepackage{tikz}
\DeclareMathOperator*{\argmax}{arg\,max}
\DeclareMathOperator*{\argmin}{arg\,min}
\usepackage{amssymb}
\usepackage{braket}
\usepackage{amsthm}
\usepackage{amsmath,mathtools}
\usepackage{float}
\usepackage{graphicx}
\usepackage{enumerate}
\usepackage{bbm}

\usepackage{verbatim}
\usepackage{accents}
\usepackage{authblk}

%%%%%%%%%%%%%%%%%%%%%%%%%%%%%%%%%%%%%
%        macros.tex                  %
%%%%%%%%%%%%%%%%%%%%%%%%%%%%%%%%%%%%%%

%From PhD thesis
\usepackage{amsmath, amssymb, bbm, xspace}
\usepackage{epsfig}
\usepackage{longtable}
\usepackage{color}
\usepackage{mathrsfs}
\usepackage{comment}
\usepackage{ifthen}
\newboolean{showcomments}
\setboolean{showcomments}{true}
\usepackage{courier}
%\usepackage{courier}
% \usepackage[T1]{fontenc}
% \usepackage[sc]{mathpazo}
% \linespread{1.05}         % Palatino needs more leading (space between lines)
\usepackage{xcolor}
\usepackage{todonotes}
\usepackage[framemethod=tikz]{mdframed}
\usepackage{lineno}
\newmdenv[leftmargin=\dimexpr-0.4em, innerleftmargin=0.5em,
rightmargin=\dimexpr-0.4em, innerrightmargin=0.5em,
linewidth=2pt,linecolor=red, topline=false, bottomline=false,
innertopmargin=0pt,innerbottommargin=0pt,skipbelow=0pt,skipabove=0pt,%
]{notex}

\newenvironment{note}%
{\vskip\dimexpr\dp\strutbox-\prevdepth\relax\notex\strut\ignorespaces}%
{\xdef\notetpd{\the\prevdepth}\endnotex\vskip-\notetpd\relax}

\let\oldtodo\todo

\makeatletter%
\DeclareDocumentCommand{\todo}{ O{} +g +d<> }{%
		\setlength{\marginparwidth}{1.5cm}%
	\IfNoValueTF{#2}{\relax}{%
		\oldtodo[caption={#2},size=\scriptsize,#1]{\renewcommand{\baselinestretch}{0.8}\selectfont\sffamily#2\par}%
	}%
	\IfNoValueTF{#3}{\relax}{%
		\IfNoValueTF{#2}{% when parmark but no todo
			\begin{note}%
				\begin{internallinenumbers}%
					\indent%
					#3%
				\end{internallinenumbers}%
			\end{note}%
		}{% when parmark and todo
			\vspace{-0\baselineskip}%
			\begin{note}%
				\begin{internallinenumbers}%
					\indent%
					#3%
				\end{internallinenumbers}%
			\end{note}%
		}%
	}%
}%
\makeatother
\usepackage{soul}

\newcommand{\hlc}[2][yellow]{{%
		\colorlet{foo}{#1}%
		\sethlcolor{foo}\hl{#2}}%
}

\newcommand{\removetodo}[2]{\todo[color=pink]{\textbf{delete:} ``#1'' #2}\hlc[pink]{#1}}
\newcommand{\inserttodo}[1]{\todo[color=green!40]{\textbf{insert:} #1}}

\newcommand{\hltodoy}[2]{\todo[color=yellow!40]{#2}\hl{#1} }

\newcommand{\hltodoc}[3]{\todo[color=#3!40]{#2}\hlc[#3]{#1} }

\newcommand{\hltodo}[2]{\todo[color=orange!40]{#2}\hlc[orange!40]{#1} }
\newcommand{\replacetodo}[2]{\todo[color=pink!40]{\textbf{replace with:}``#2'' }\hl{#1} }

\usepackage{marginnote}
\newcommand{\todol}[1]{{%
		\let\marginpar\marginnote
		\reversemarginpar
		\renewcommand{\baselinestretch}{0.8}%
		\todo{#1}}}

\newcommand{\inserttodol}[1]{{%
		\let\marginpar\marginnote
		\reversemarginpar
		\renewcommand{\baselinestretch}{0.8}%
		\inserttodo{#1}}}

\newcommand{\removetodol}[2]{{%
		\let\marginpar\marginnote
		\reversemarginpar
		\renewcommand{\baselinestretch}{0.8}%
		\removetodo{#1}{#2}}}

\newcommand{\hltodol}[2]{{%
		\let\marginpar\marginnote
		\reversemarginpar
		\renewcommand{\baselinestretch}{0.8}%
		\hltodo{#1}{#2}}}

\newcommand{\replacetodol}[2]{{%
		\let\marginpar\marginnote
		\reversemarginpar
		\renewcommand{\baselinestretch}{0.8}%
		\replacetodo{#1}{#2}}}

\newcommand{\hltodoyl}[2]{{%
		\let\marginpar\marginnote
		\reversemarginpar
		\renewcommand{\baselinestretch}{0.8}%
		\hltodoy{#1}{#2}}}

\newcommand{\hltodocl}[3]{{		\let\marginpar\marginnote
		\reversemarginpar
		\renewcommand{\baselinestretch}{0.8}%
		\hltodoc{#1}{#2}{#3}}}

% \usepackage{titlesec}
% \titlelabel{\qed \thetitle \quad}
% \titleformat*{\section}{\itshape}

% This abstract environment is a little narrower
% than the default.  Note the capital A...
% \newenvironment{Abstract}
%   {\begin{center}
%     \textbf{Abstract} \\
%     \medskip
%   \begin{minipage}{4.8in}}
%   {\end{minipage}
% \end{center}}

\newtheorem{theorem}{Theorem}[section]

\newtheorem{proposition}[theorem]{Proposition}

\newtheorem{definition}{Definition}[section]

%\newcounter{theorem}[chapter]
% \renewcommand{\thetheorem}{\thechapter.\arabic{theorem}}
%
% \renewenvironment{theorem}{\refstepcounter{theorem} {\ \\[2ex]}{\bf \large \textsf{Theorem \thetheorem}} \newline \it}{}

\def\bkE{{\rm I\kern-.17em E}}
\def\bk1{{\rm 1\kern-.17em l}}
\def\bkD{{\rm I\kern-.17em D}}
\def\bkR{{\rm I\kern-.17em R}}
\def\bkP{{\rm I\kern-.17em P}}

\def\bkZ{{\bf{Z}}}

\def\bkE{{\rm I\kern-.17em E}}
\def\bk1{{\rm 1\kern-.17em l}}
\def\bkD{{\rm I\kern-.17em D}}
\def\bkR{{\rm I\kern-.17em R}}
\def\bkP{{\rm I\kern-.17em P}}

\makeatletter
\newcommand{\pushright}[1]{\ifmeasuring@#1\else\omit\hfill$\displaystyle#1$\fi\ignorespaces}
\newcommand{\pushleft}[1]{\ifmeasuring@#1\else\omit$\displaystyle#1$\hfill\fi\ignorespaces}
\makeatother

 %%% i.e.,

\def\bkZ{{\bf{Z}}}
\def\b12{(\beta_1,\beta_2)}

\newenvironment{example}{{\noindent \bf Example}}{\hfill $\square$\hspace{-4.5pt}\vspace{6pt}}
\newcounter{example}
\renewcommand{\theexample}{\thesection.\arabic{example}}

\newcounter{remark}
\renewcommand{\theremark}{\thesection.\arabic{remark}}

\newlength{\noteWidth}
\setlength{\noteWidth}{.75in}
\long\def\notes#1{\ifinner
{\tiny #1}
\else
\marginpar{\parbox[t]{\noteWidth}{\raggedright\tiny #1}}
\fi\typeout{#1}}

 \def\notes#1{\typeout{read notes: #1}} %uncomment for final version

%\input{/home/ankur/latexfiles/algorithm2e.sty}
%  \def\proof{\penalty 25\noindent{\bf Proof.}\nobreak\hskip 1em\ignorespaces}

                 % math mode

%\newcommand{\nr}{\{1,\hdots,2^{nR}\}}

%\newcommand{\expec}[1]{\Ebb\Big[#1\Big]}

%\newcommand{\Gsn}{G_{\mathsf{s}}^n}

%\newcommand{\Gcn}{G_{\mathsf{c}}^n}
%\newcommand{\ut}{\mathsf{U}}

 %\skew{3.8}

%\def\uttilde{\tilde \ut}
%{\Bscr}
%{\Sscr}

\newcommand{\ie}{i.e.\@\xspace} %%% i.e.,
 %%% e.g.,
 %%% e.g., Gill \etal (1986)

% Scientific Notation (E)

% Miscellaneous items
%\renewcommand{\vec}[1]{\ensuremath{\mathbf{#1}}}  %requires amsmath package

\newcommand{\Real}{\ensuremath{\mathbb{R}}}

\def\Nbb{{\mathbb{N}}}

\def\dim{\mathop{\hbox{\rm dim}}}

\def\Span{\mbox{\rm span}}

\def\spose#1{\hbox to 0pt{#1\hss}}

\def\text #1{\hbox{\quad#1\quad}}

\def\Escr{\mathcal{E}}

%%% Fixed-size glue, only for math mode

    %           thin space
    %         medium space
    %          thick space
\def\nthinsp{\mskip -2   mu}

%%% Specially lowered subscripts (see \sub above)

%%% Subscripts

%%% Superscript stars

\def\superstar{^{\raise 0.5pt\hbox{$\nthinsp *$}}}
\def\SUPERSTAR{^{\raise 0.5pt\hbox{$*$}}}
  %%% Less space after \xstar if followed by , or .

\def\lamstarT {\lambda^{\raise 0.5pt\hbox{$\nthinsp *$}T}}

%%% bars, hats, tildes  (\skew4 is the default)

%%%%%%%%%%%%%%%%%%%%%%%%%%%%%%%%%%%%%
%  Math italic and calligraphic fonts
%%%%%%%%%%%%%%%%%%%%%%%%%%%%%%%%%%%%%

\def\Ascr{{\cal A}}

\def\Hscr{{\cal H}}

\def\Vscr{{\cal V}}

\def\Rscr{{\cal R}}
\def\Gscr{{\cal G}}
\def\Cscr{{\cal C}}

\let\forallnew\forall
\renewcommand{\forall}{\forallnew\ }
\let\forall\forallnew

		\def\bkE{{\rm I\kern-.17em E}}
		\def\bk1{{\rm 1\kern-.17em l}}
		\def\bkD{{\rm I\kern-.17em D}}
		\def\bkR{{\rm I\kern-.17em R}}
		\def\bkP{{\rm I\kern-.17em P}}
		\def\bkY{{\bf \kern-.17em Y}}
		\def\bkZ{{\bf \kern-.17em Z}}
		\def\bkC{{\bf  \kern-.17em C}}
%		\usepackage[ruled,vlined,boxed]{algorithm2e}
		%\setlength{\baselineskip}{24pt}
		%\setlength{\topmargin}     { 0.0in}    % One inch top margins
		%\setlength{\headsep}       {24.0pt}    % with Header
		%\setlength{\textheight}    { 8.5in}
		%\setlength{\textwidth}     { 6.5in}    % One inch side margins
		%\setlength{\oddsidemargin} { 0.0in}
		%\setlength{\evensidemargin}{ 0.0in}
		%\setlength{\marginparsep}  { 0.0pt}    % No Margin notes
		%\setlength{\marginparwidth}{ 0.0pt}
		%\pagestyle{headings}

%%%%%%%%%%%%%%%%%%%%%
%Colours
%%%%%%%%%%%%%%%%%%%%%
%\definecolor{violet}{rgb}{.5,0.0,.80}
%\def\violet{{ \color{violet} }}
%\definecolor{dgreen}{rgb}{0,0.5,0.0}

%
{\begin{list}{}%
         {\setlength{\leftmargin}{#1}}%
         \item[]%
}
{\end{list}}

\def\Rsf{\mathsf{R}}
		% -- Environments --
		\def\bsp{\begin{split}}
		\def\beq{\begin{eqnarray}}
		\def\bal{\begin{align*}}
		\def\bc{\begin{center}}
		\def\be{\begin{enumerate}}
		\def\bi{\begin{itemize}}
		\def\bs{\begin{small}}
		\def\bS{\begin{slide}}
		\def\ec{\end{center}}
		\def\ee{\end{enumerate}}
		\def\ei{\end{itemize}}
		\def\es{\end{small}}
		\def\eS{\end{slide}}
		\def\eeq{\end{eqnarray}}
		\def\eal{\end{align*}}
		\def\esp{\end{split}}
		\def\qed{ \vrule height7.5pt width7.5pt depth0pt}  %width4.17pt depth0pt}

	\def\cp2problem#1#2#3#4{\fbox
		 {\begin{tabular*}{0.9\textwidth}
			{@{}l@{\extracolsep{\fill}}l@{\extracolsep{6pt}}l@{\extracolsep{\fill}}c@{}}
				#1 & & $#4 $
			\end{tabular*}}}

		\def\bkE{{\rm I\kern-.17em E}}
		\def\bk1{{\rm 1\kern-.17em l}}
		\def\bkD{{\rm I\kern-.17em D}}
		\def\bkR{{\rm I\kern-.17em R}}
		\def\bkP{{\rm I\kern-.17em P}}

		\def\bkZ{{\bf{Z}}}

\newcommand {\beeq}[1]{\begin{equation}\label{#1}}
\newcommand {\eeeq}{\end{equation}}
\newcommand {\bea}{\begin{eqnarray}}
\newcommand {\eea}{\end{eqnarray}}

\def\texitem#1{\par\smallskip\noindent\hangindent 25pt
               \hbox to 25pt {\hss #1 ~}\ignorespaces}

%\def\example{{\bf Example :\ }}

%%% Local Variables:
%%% mode: latex
%%% TeX-master: "report"
%%% End:

\def\bsp{\begin{split}}
		\def\beq{\begin{eqnarray}}
		\def\bal{\begin{align*}}
		\def\bc{\begin{center}}
		\def\be{\begin{enumerate}}
		\def\bi{\begin{itemize}}
		\def\bs{\begin{small}}
		\def\bS{\begin{slide}}
		\def\ec{\end{center}}
		\def\ee{\end{enumerate}}
		\def\ei{\end{itemize}}
		\def\es{\end{small}}
		\def\eS{\end{slide}}
		\def\eeq{\end{eqnarray}}
		\def\eal{\end{align*}}
		\def\esp{\end{split}}
		\def\qed{ \vrule height7.5pt width7.5pt depth0pt}  %width4.17pt depth0pt}

                        %%%%%%%%%%%%%%%%%%%%%%%%%%%%%%%%%%%%%
%        macros.tex                  %
%%%%%%%%%%%%%%%%%%%%%%%%%%%%%%%%%%%%%%

\usepackage{amsmath, amssymb, xspace}
\usepackage{epsfig}
\usepackage{longtable}
\usepackage{color}
\usepackage{mathrsfs}
\usepackage{subfig}
\def\Cscr{{\cal C}}

%\newcommand{\minim}{\mathop{\hbox{\rm min}}}
%\newcommand{\maximize}[1]{\displaystyle\maxim_{#1}}
%\newcommand{\maxim}{\mathop{\hbox{\rm maximize}}}

%\newcommand{\minimize}[1]{\displaystyle\minim_{#1}}

%			\def\problemsmall#1#2#3#4{\fbox
%		 {\begin{tabular*}{0.8 \textwidth}
%			{@{}l@{\extracolsep{\fill}}l@{\extracolsep{-3pt}}l@{\extracolsep{\fill}}c@{}}
%				#1 & & $\minimize{#2}$ $#3$ & $ $ \\[2pt]
%				$\sub\ $	 &$ $     & $#4$ & $ $
%			\end{tabular*}}
%			}

%\def\maximproblemsmalla#1#2#3#4{\fbox
%		 {\begin{tabular*}{0.49\textwidth}
%			{@{}l@{\extracolsep{\fill}}l@{\extracolsep{-4pt}}l@{\extracolsep{\fill}}c@{}}
%				#1 &  & $\maximum{#2}$  $#3$ & $ $ \\[4pt]
%					  $\sub \ $  &   & $#4$ &  $ $
%			\end{tabular*}}
%			}
%			\def\maxproblemlarge#1#2#3#4{\fbox
%		 {\begin{tabular*}{1.0\textwidth}
%			{@{}l@{\extracolsep{\fill}}l@{\extracolsep{-2pt}}l@{\extracolsep{\fill}}c@{}}
%				#1 & $\max{#2}$ & $#3$ & $ $ \\[3pt]
%					 & $\sub $    & $#4$ & $ $
%			\end{tabular*}}
%			}

%%%%%%%%%%%%%%%%%%%%%%%%%%%%%
% 	bb, bf and bk
%%%%%%%%%%%%%%%%%%%%%%%%%%%%%

\reversemarginpar
%\title{\bf Graph Homology and Flow Complexity in Transportation Networks}
\title{\bf Characterizing Flow Complexity in Transportation Networks using Graph Homology}
\author{Shashank A. Deshpande \and Hamsa Balakrishnan\thanks{Shashank A Deshpande is a graduate student in the Department of Aeronautics and Astronautics at the Massachusetts Institute of Technology (MIT). Hamsa Balakrishnan is the William E. Leonhard (1940) Professor of Aeronautics and Astronautics at MIT. This work was sponsored in part by NASA grant \#80NSSC23M0220 and the NASA University Leadership Initiative (grant \#80NSSC21M0071), but this article solely reflects the opinions and conclusions of its authors and not NASA.}} 
\begin{document}
	\maketitle
	\begin{abstract}
Series-parallel network topologies generally exhibit simplified dynamical behavior and avoid high combinatorial complexity. A comprehensive analysis of how flow complexity emerges with a graph's deviation from series-parallel topology is therefore of fundamental interest.  We introduce the notion of a robust $k$-path on a directed acycylic graph, with increasing values of the length $k$ reflecting increasing deviations. We propose a graph homology with robust $k$-paths as the bases of its chain spaces. In this framework, the topological simplicity of series-parallel graphs translates into a triviality of higher-order chain spaces. We discuss a correspondence between the space of order-three chains and sites within the network that are susceptible to the Braess paradox, a well-known phenomenon in transportation networks. In this manner, we illustrate the utility of the proposed graph homology in sytematically studying the complexity of flow networks.
	\end{abstract}
 %\keywords{Network analysis, transportation networks, algebraic/geometric, large-scale systems.}
 
	\section{Introduction}
Directed graphs are widely used to model flows in many real-world transportation and logistic networks. A directed acyclic graph (DAG) is said to be a series-parallel graph if it can be constructed via sequential series and parallel combination of edges or smaller series-parallel graphs. The possession of a series-parallel topology is a global property of a DAG; it is not thoroughly characterized by localized subgraphs within the graph. Series-parallel topologies are a form of topological simplicity: for example, a many combinatorial problems can be solved in linear-time on series-parallel graphs~\cite{takamizawa1982lineartime}. Similarly, series-parallel topologies are known to simplify the analysis of electrical networks~\cite{cederbaum1984graphnetanalysis}. Therefore, the deviation of a graph from a series-parallel topology can be considered an increase in its flow complexity. 

Another example of such flow complexity is the occurance of the Braess Paradox, a well-known phenomenon in flow networks (e.g., transportation \cite{RoughgardenTardos2002}, power grids~\cite{Schafer2022powgrid}, ecological networks \cite{sahasrabudhe2011ecobraess}) where the addition of a link results in the slowdown of flows. Prior studies have shown that the Braess Paradox occurs when a network deviates from a series-parallel topology~\cite{milchtaich2006topology, chen2016netchar}. 
 Localizing `sites' within a network that are susceptible to the Braess paradox is therefore of broad relevance in network analysis. This paper seeks to systematically characterize the flow complexity that arises as a consequence of a network deviating from a series-parallel topology.

 We introduce a notion of robust path of length $k$ (or a robust $k$-path) on a DAG, where increasing length is a reflection of increased flow complexity. For instance, we find that the presence of a robust $3$-path is a necessary and sufficient condition for a network to deviate from a series-parallel topology, and that each robust $3$-path is associated with a site susceptible to the Braess Paradox. Further, a robust $k$-path identifies the presence of upto ${k\choose 4}$ distinct susceptible sites within the network. This motivates us to develop a systematic approach for the characterization of flow complexity in DAGs using robust paths as basic objects. For this purpose, we utilize the algebraic structure of graph homology. In particular, we associate the linear spans of robust $k$-paths with $k$-chains in a graph homological framework and prove that the association sets up a consistent chain complex. We demonstrate that the induced chain complex provides a representation of the underlying DAG where higher-order chains identify sites of high flow complexity within the graph. We  illustrate the utility of this framework by showing that series-parallel topology of a DAG translates to triviality of $3$-chains in the chain complex. This algebraic restatement of a known combinatorial result is validation of how the proposed homology can be used to systematically investigate flow complexity.

 Homological approaches have been previously used to study global features in complex real-world networks. For example, simplicial homology has been deployed as a generalized clustering mechanism that identifies interconnections within and among clustered communities on undirected graphs~\cite{petri2013strata, aktas2019perhomonet}. Path homology~\cite{grigoryan2020homologies} on directed graphs has been shown to identify topological characteristics that classify complex networks\cite{chowdhury2022pathhomo, chowdhury2019deepnet}, although the intuition behind this classification remains largely intractable. A prior interpretation that path homology measures the consistency and robustness of directional flow in a graph is in line with the graph homology we develop in this article \cite{chowdhury2022pathhomo, chowdhury2019deepnet}.
  We believe that our approach can be used for the systematic localization of flow complexity in networks, and to understand its implications. 

    The organization of this article proceeds as follows. In the brief subsection that follows, we introduce the concepts of series-parallel graphs and robust paths, and the role of the latter in reflecting the deviation of a flow from the series-parallel nature. In Section \ref{sec:graphhomology}, we develop a consistent algebraic structure for the formal study of these concepts. Subsequently in Section \ref{sec:spalgebra}, we formalise the notion of a series-parallel topology and use the developed structure to produce an algebraic characterization of the same, as well as to characterize deviations from this topology. %We conclude with a brief discussion in Section \ref{sec:conclusion}. 

    \begin{figure*}[ht!]
    \centering
    \subfigure[]{\includegraphics[width=6cm]{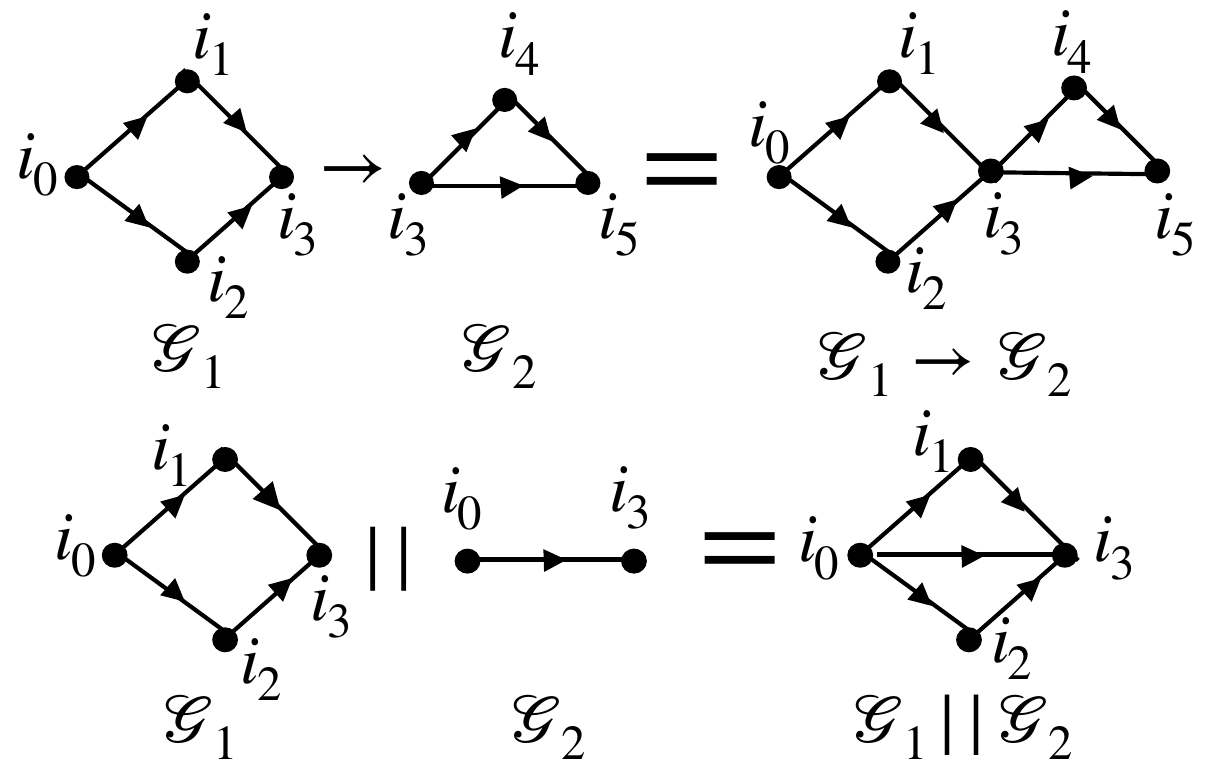}} 
    \subfigure[]{\includegraphics[width=4cm]{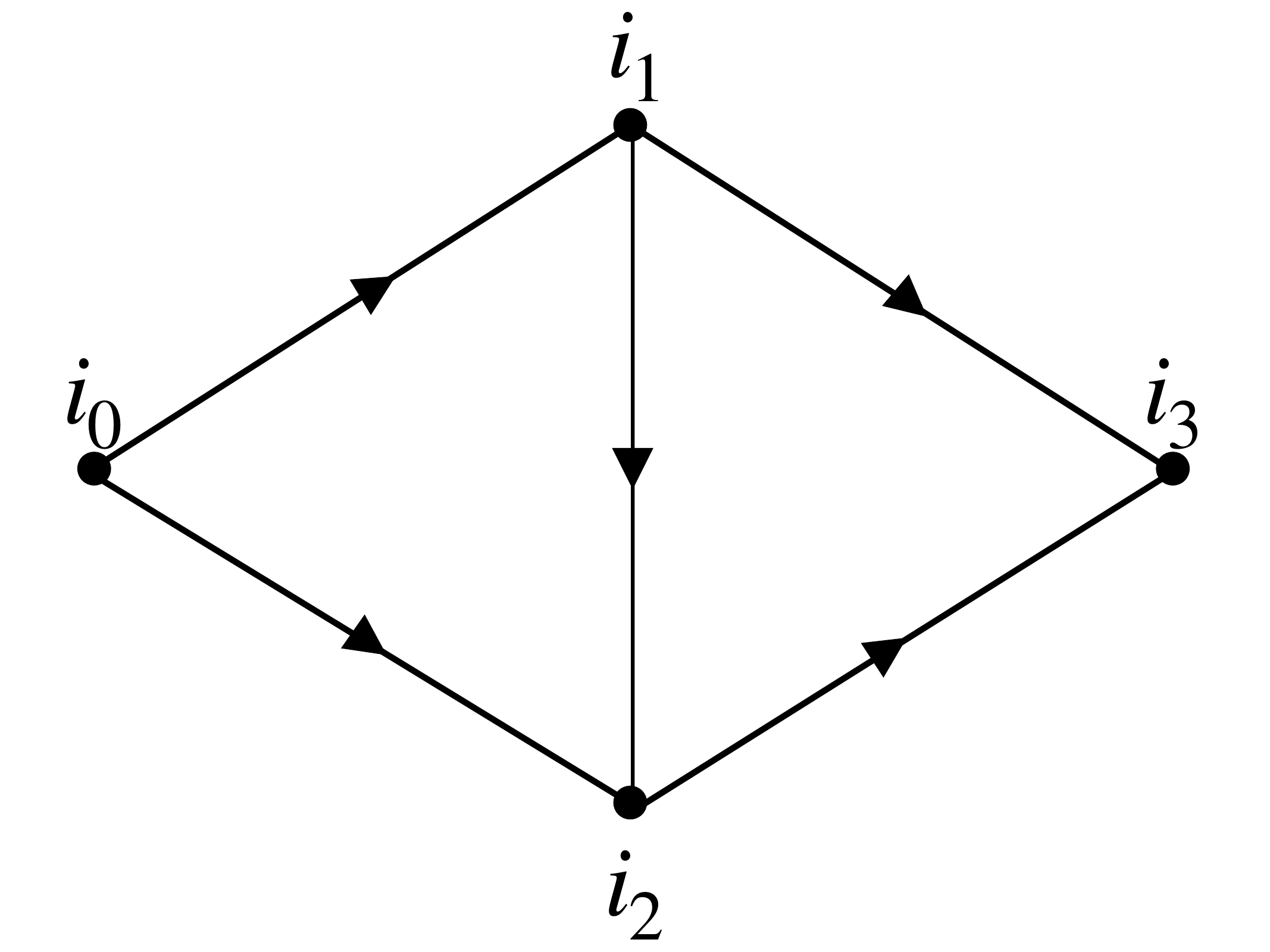}}
    \subfigure[]{\includegraphics[width=7cm]{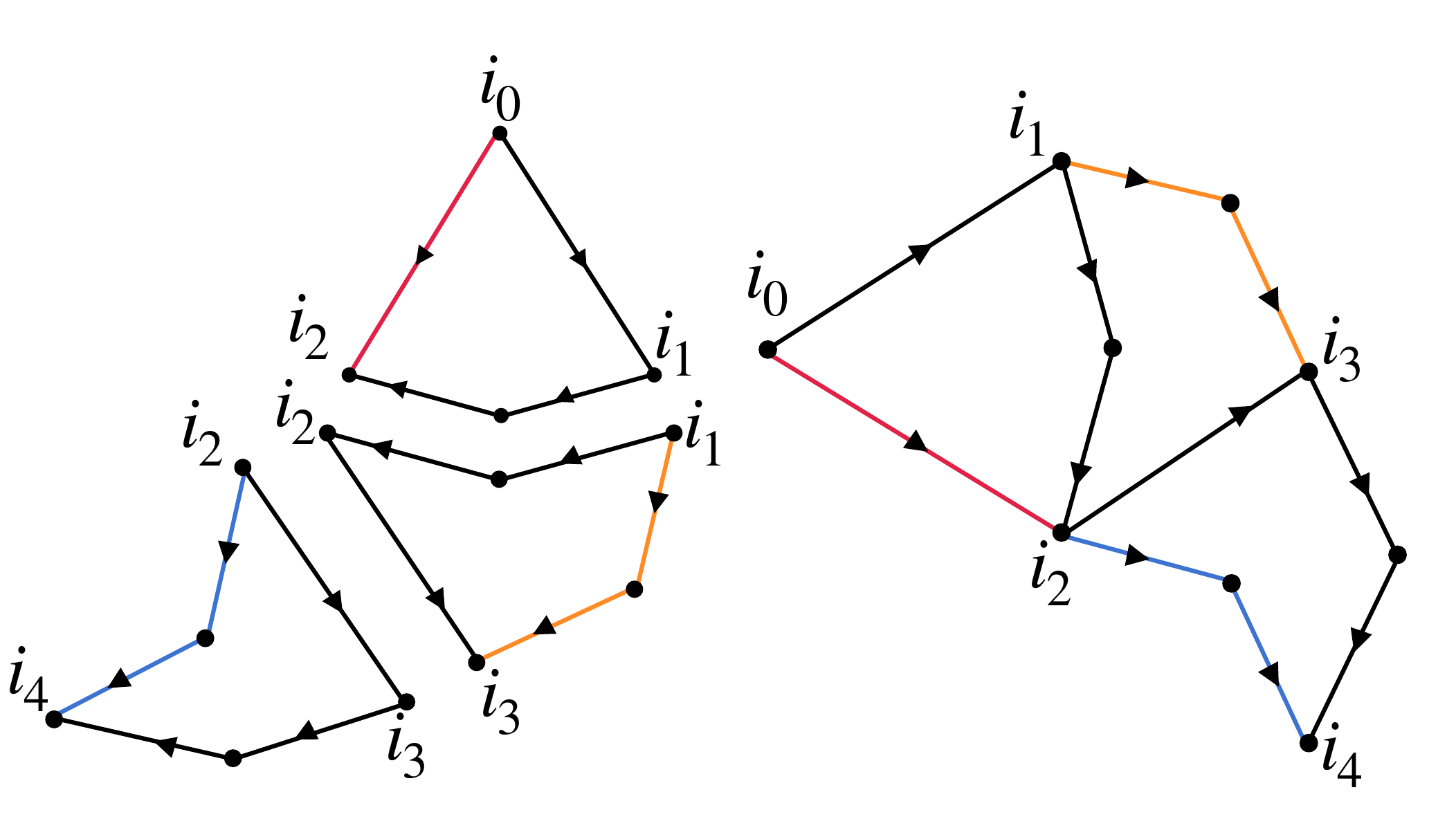}}
    \caption{(a) Series and Parallel Combination (b) The Braess Embedding (c) Robust 2-paths combine into a robust 4-path }
    \label{fig:collection1} 
\end{figure*}
        \subsection{Series-parallel graphs and the Braess embedding}
        We are interested in a class of DAGs called two-terminal graphs where directional flows emanate from an origin vertex (source) and are absorbed by a destination vertex (sink). %(see graph $\Gscr$ in Figure \ref{fig:dim3unnecessary} for an example). 
        Series-parallel graphs are two-terminal graphs obtained by serially or parallely combining edges and/or smaller series-parallel graphs. See Figure \ref{fig:collection1}(a) for a depiction of series and parallel combination operations.  The departure of a two-terminal graph from the series-parallel topology is known to follow from the appearance of the structure called the Braess embedding or a Braess site, shown in Figure \ref{fig:collection1}(b) as a graphical embedding within the network~\cite{milchtaich2006topology, chen2016netchar}. The tuple of the vertices involved in the embedding (e.g., $(i_0, i_1, i_2, i_3)$ in Figure \ref{fig:collection1}(b)) localizes the site within a network.
 
 We wish to investigate the deviation of a graph from a series-parallel topology in a comprehensive manner. To this end, we introduce the notion of a robust $k$-path in a DAG. The basic object in our discussion is the robust 2-path, which we also call a \emph{triangle}.  We call $e_{i_0i_1i_2}$ a robust 2-path in $\Gscr$ if  $i_0, i_1, i_2$ are three vertices in $\Gscr$, and there exists a pair of non-intersecting routes from $i_0$ to $i_2$, exactly one of which passes through through $i_1$. If robust 2-paths occur as adjacent structures within the graph, they give rise to longer robust paths. Therefore, $e_{i_0...i_p}$ is a robust $p$-path if $e_{i_k i_{k+1} i_{k+2}}$ is a robust $2$-path for each $k\in [p-2]$, . For illustration, see Figure \ref{fig:collection1}(c) where three adjacent robust 2-paths $(e_{i_0i_1i_2}$, $e_{i_1i_2i_3}$, $e_{i_2i_3i_4})$ are shown to merge and give rise to a robust 4-path $(e_{i_0i_1i_2i_3i_4})$.
 
 %   \begin{figure}[h]
	% 	\centering
	% 	\includegraphics[width=8cm]{robustpaths.pdf}
	% 	\caption{Robust 2-paths combining into a robust 4-path}
	% 	\label{fig:collection1}
	% \end{figure}
We will show that the presence of a robust 3-path ensures the presence of a Braess-susceptible site or a Braess embedding within the network.
More generally, long robust paths within a network contribute to the rising flow complexity as the network topology deviates from a series-parallel one. 

We associate that linear spaces spanned by robust paths to chains within an algebraic structure of graph homology. We expect that the algebraic structure can be leveraged to systematically study flow complexity and related features in two-terminal graphs. 
To this end, we show that series-parallel graphs associate with a chain complex truncated at order two, that is, robust paths of length three and above are absent in the associated chain complex. Therefore, we map the topological simplicity of series-parallel graphs onto an algebraic specification in the graph homology of robust paths. We believe that the constructed graph homology carries further potential for the characterization and analysis of complex features in flow networks.
	 \subsection{Notation}
	 \begin{enumerate}[(i)]
	 	\item For a DAG $\Gscr=(\Vscr, \Escr)$, we denote a directed edge from $i\in\Vscr$ to $j\in\Vscr$ by $e_{ij}$. 
        \item  $e_{i_0...i_p}$ is used to denote the tuple $(i_0,...,i_p)\in \Vscr^{p+1}$.
        \item  $i\in\Gscr$ and $e_{ij}\in\Gscr$ respectively mean $i\in\Vscr$ or $e_{ij}\in \Escr$. 
	 	%\item $\mathbbm{1}_A(x)$ denotes the indicator function of the event $x\in A$.
	 	\item $[N]$ denotes the set $\{1, 2, ..., N\}$ for each $N\in\Nbb$.
	 	\item Union and intersection on graphs are as usual, for $\Gscr_i=(\Vscr_i, \Escr_i)$, $\cup_i\Gscr_i=(\cup_i\Vscr_i, \cup_i\Escr_i)$ and $\cap_i\Gscr_i=(\cap_i\Vscr_i, \cup_i\Escr_i)$.
        \item For DAGs $\Gscr_1$ and  $\Gscr_2$, we say $\Gscr_1\cong\Gscr_2$ if $\Gscr_1=\Gscr_2$ up to relabelling of their vertices and edges.
        \item $K_{ij}$ denotes the edge graph  $K_{ij}:=(\{i, j\}, \{e_{ij}\}, i, j)$.
        \item $\mathbb{K}$ denotes a field, the reader may specialise to $\mathbb{K}=\Real$.
	 \end{enumerate}
	
	\section{Graph Homology of Robust Paths}\label{sec:graphhomology}
    In Subsection \ref{sec:graphdef} below, we define routes, two-terminal graphs, and colored route simplices of two-terminal graphs. In Subsection \ref{sec:graphhomodev} that follows, we formalise the notion of a robust path and embed the linear spaces spanned by the robust paths into the homological algebra.
	\subsection{Two-terminal DAGs and colored route simplices}\label{sec:graphdef}
	A two-terminal DAG is induced by a union of linear graphs or routes, which we define as follows. We also introduce formal notation for segments of a route, which are shorter routes with different origin-destination pairs. 
	\begin{definition} 
	(i)	A route $\Rscr$ is a tuple $\Rscr=(\Vscr, \Escr, o, d, r)$ where $\Vscr$ is a finite set of nodes or vertices, $r:\Vscr\to\Nbb$ is a strict order on $\Vscr$, $o=\argmin_{i\in\Vscr} r(i)$, $d=\argmax_{j\in\Vscr} r(j)$, and, $\Escr\subset \Vscr\times\Vscr$ contains all edges $e_{ij}:=(i, j)$ if and only if $i$ and $j$ are consecutive in the order $r$, that is, $r(i)<r(j)$ and  $\nexists\ k\in\Vscr: r(i)<r(k)<r(j)$.\\
		(ii) Let $\Rscr=(\Vscr, \Escr, o, d, r)$ be a route and $i, j\in\Vscr$ be two of its vertices. Define and denote another route from $i$ to $j$ as follows: $\Rscr^{i\to j}:=(\Vscr^{i\to j}, \Escr^{i\to j}, i, j, r)$ where $\Vscr^{i\to j} =\{k\in\Vscr: r(i) \leq r(k)\leq r(j)\}$, $\Escr^{i\to j}=\{e_{ ab}| a, b\in \Vscr^{i\to j}\}\cap \Escr$. We regard $\Rscr^{i\to i}=(\{i\}, \phi, i, i, r)$ as the vertex $i$.
	\end{definition}
     \noindent\emph{Note:} Let $\Rscr_1$ and $\Rscr_2$ be two arbitrary routes. If the intersection graph $\Rscr_1\cap\Rscr_2$ is non-empty, then we take note of the fact that it is expressible in the following form:
     \begin{equation}
     	\Rscr_1\cap\Rscr_2=\bigcup_{n=1}^{n_0} \Rscr_1 ^{p_n\to q_n}=\bigcup_{n=1}^{n_0} \Rscr_2 ^{p_n\to q_n} \label{eq:intersectiongraph}
     \end{equation}
     where $n_0\in\Nbb$, $p_n, q_n\in \Vscr_1\cap\Vscr_2$ for each $n$.
     
     A union of routes that share the same origin-destination pair induces a two-terminal DAG as follows. Acyclicity of the induced DAG is ensured by requiring the routes to respect each other's order.
	\begin{definition}
		i) Let $\{\Rscr_\alpha\}_{\alpha\in A}=\{(\Vscr_\alpha, \Escr_\alpha, o_\alpha, d_\alpha, r_\alpha)\}_{\alpha\in A}$ be a finite collection of routes with the same origin $o$ and destination $d$ (\ie $o_\alpha\equiv o, d_\alpha\equiv d$) that obey the partial order induced by $\{r_\alpha\}_{\alpha\in A}$:
		\begin{multline}\label{eq:partialorderconsistent}
			\forall \delta, \beta\in A,  \{i, j\}\in \Vscr_\delta\cap \Vscr_\beta\implies\\
			r_\delta(i)< r_\delta(j)\longleftrightarrow r_\beta(i)<r_\beta(j)
		\end{multline}
		Then, the tuple $\Gscr:=(\Vscr:=\cup_{\alpha\in A} \Vscr_\alpha, \Escr:=\cup_{\alpha\in A} \Escr_\alpha, o, d, (r_{\alpha})_{\alpha\in A})$ is called a two-terminal graph from origin $o$ to destination $d$ induced by the collection of routes $(\Rscr_\alpha)_{\alpha\in A}$. We then write
		$\Gscr=\bigcup_{\alpha\in A} \Rscr_\alpha$ and say that $i<j$ if $i, j\in\Vscr$ and $\exists \alpha\in A$ such that $r_\alpha(i)<r_\alpha(j)$.
	\end{definition}
	\noindent \emph{Note:} We call the collection $\{\Rscr_i\}_{i\in [N]}$ a complete enumeration of routes in $\Gscr=\bigcup_{i\in[N]}\Rscr_i$ if it contains all $o$ to $d$ routes in $\Gscr$. All collections in this article are assumed to be complete enumerations. We also drop the underlying partial order $(r_i)_{i\in [N]}$ in our notation and use $\Gscr=(\Vscr, \Escr, o, d)$ to represent the two-terminal graph. 
	
	A route induces a two-terminal DAG we call a route-simplex as follows. The route-simplex shares the vertex set of the underlying route $\Rscr$ and contains an edge $e_{ij}$ if $j$ is reachable from $i$. A union of route simplices induced by the constituent routes of a two-terminal DAG is declared as the route simplex of the DAG. We attach a multi-coloring to each edge $e_{ij}$ in a route-simplex to record the set of routes that reach $j$ from $i$; this produces what we call a colored route simplex of a DAG. These notions are formalised by the definition below.
	\begin{definition}
		(i) The route-simplex of $\Rscr_i=(\Vscr_i, \Escr_i, o,d, r_i)$ denoted by $Sim(\Rscr_i)$ is the two terminal graph $(\Vscr_i, \Rsf(\Escr_i), o, d)$ where $\Rsf(\Escr)=\{e_{ij}: i, j\in \Vscr, r(i)<r(j)\}$.\\
		(ii) The route-simplex of $\Gscr=\bigcup_{i\in[N]} \Rscr_i$ is defined to be the union $Sim(\Gscr):=\bigcup_{i\in [N]} Sim(\Rscr_i)$.\\
        (iii) The colored route simplex of $\Gscr$ is the tuple $\Rsf(\Gscr):=(\Vscr, \Rsf(\Escr), o, d, \Cscr)$  where the (multi)coloring  $\Cscr: \Rsf(\Escr)\to 2^{[N]}$ obeys $\Cscr(e_{pq})=\{i\in [N]| e_{pq}\in Sim(\Rscr_i)\}$.
	\end{definition}
	% We attach a multi-coloring to to each edge in the route-simplex $Sim(\Gscr)$ of a two-terminal graph $\Gscr$ to produce what we call a colored route simplex $\Rsf(\Gscr)$ as follows.
	% \begin{definition}
	% 	Let $\Gscr=(\Vscr, \Escr, o, d)=\bigcup_{i=1}^N \Rscr_i$ be a two terminal graph where  each $\Rscr_i$ is a route. 
	% \end{definition}
	% The multicoloring corresponding to each edge thus records all routes that pass through it.
	Consider four routes $\{\Rscr_\alpha\}_{\alpha\in [4]}=\{(\Vscr_\alpha, \Escr_\alpha, o, d, r_\alpha)\}$ with $\Vscr_1=\{o, 1, 2, d\}, \Escr_1=\{e_{o1}, e_{12}, e_{2d}\},
		\Vscr_2=\{o, 1, 2, 3, d\}, \Escr_2=\{e_{o1}, e_{12}, e_{24}, e_{4d}\},
		\Vscr_3=\{o, 3, 4, d\}, \Escr_3=\{e_{o3}, e_{34}, e_{4d}\},
		\Vscr_4=\{o, 3, 5, d\}, \Escr_4=\{e_{o3}, e_{35}, e_{5d}\}$
	which constitute the two-terminal graph $\Gscr=\cup_\alpha \Rscr_\alpha$ shown in Figure \ref{fig:dim3unnecessary} below.
	\begin{figure}[h]
		\centering
		\includegraphics[width=7cm]{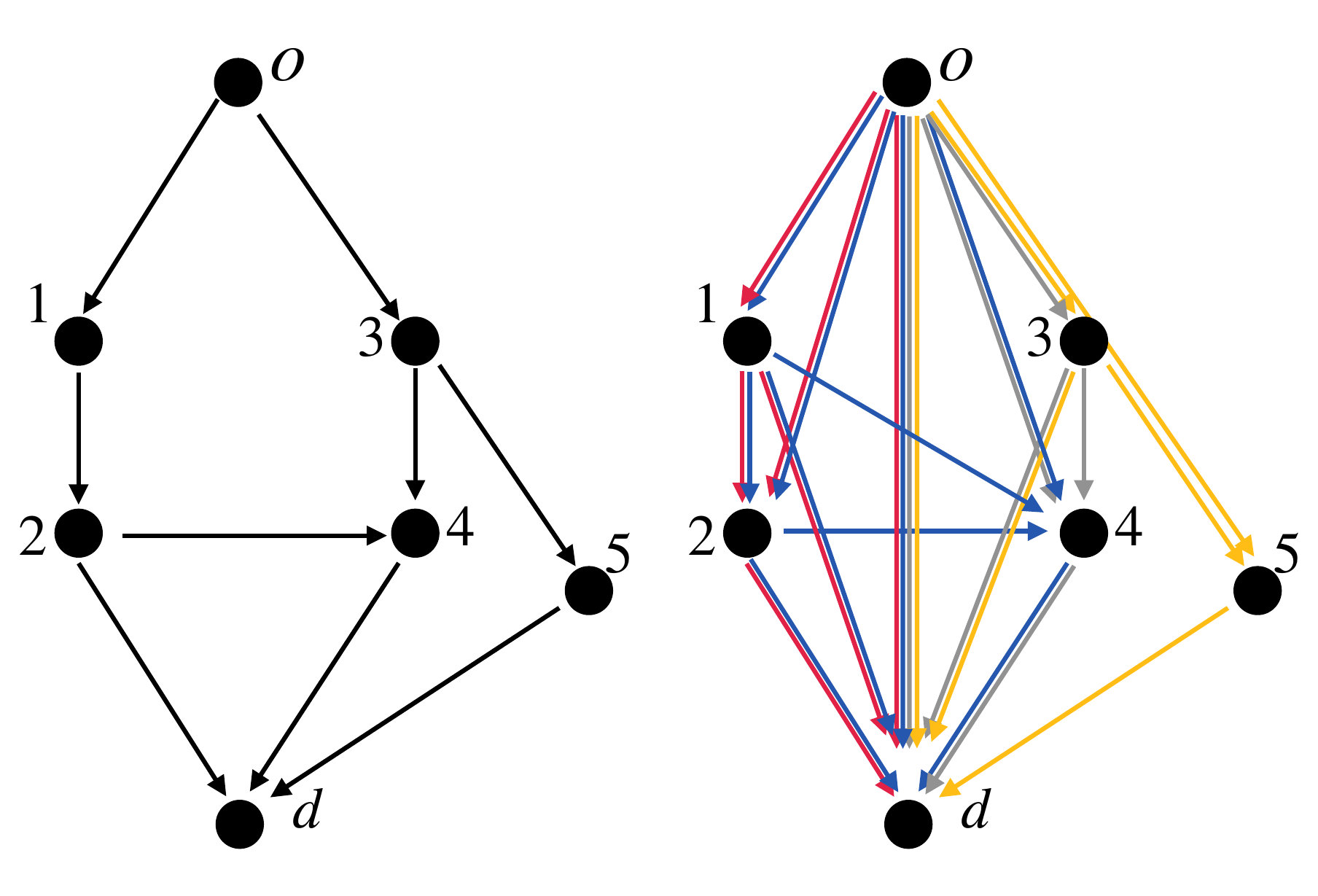}
	\caption{$\Gscr=\bigcup_\alpha \Rscr_\alpha$; $\Rsf(\Gscr): \dim\Omega_3(\Rsf(\Gscr))\neq0$}
		\label{fig:dim3unnecessary}
	\end{figure}
	 $\Rsf(\Gscr)$  is depicted alongside where Red, Blue, Gray and Yellow respectively depict the colors $1, 2, 3$ and $4$.
%Towards constructing the shown colored route simplex, first, the set of edges $\Rsf(\Escr)$ is constructed as the union of simplices of all routes, \ie, it obeys $(\Vscr, \Rsf(\Escr))=\bigcup_i Sim (\Rscr_i)$. Next, we start with $\Rscr_1$ and assign the first color (Red in Fig \ref{fig:coloredroutesimplex}) to all edges of $Sim(\Rscr_1)$. Then, we take $\Rscr_2$ and assign the second color (Blue) to all edges in $Sim(\Rscr_2)$ except those that are  in the $Sim(\Rscr_2\cap\Rscr_1)$, \ie, the simplex of the intersection of $\Rscr_2$ with previously considered routes. We carry on this process for all four routes that induce $\Gscr$. Note that in Figure \ref{fig:coloredroutesimplex}, we have repeatedly drawn every edge for each of its colors for clarity of depiction.
	\subsection{Development of the robust path homology}\label{sec:graphhomodev}
    We now develop our graph homology of robust paths.
	Let $\Gscr=(\Vscr, \Escr, o, d)=\bigcup_{i=1}^N \Rscr_i=\bigcup_{i=1}^N (\Vscr_i, \Escr_i, o, d, r_i)$ be a two-terminal DAG and $\Rsf(\Gscr)=(\Vscr, \Rsf(\Escr), \Cscr)$ be the colored route simplex of $\Gscr$. The space of of vertex tuples $\Vscr^{p+1}$ is refined to record graph topology in the refined subset of allowed paths.
	\begin{definition}
		(i)  $e_{i_0...i_p}$ is an elementary allowed $p$-path in $\Rsf(\Gscr)$ if $e_{i_{m-1} i_{m}}\in \Rsf(\Escr)$ for all $m\in [p]$.\\
		(ii) We define the $\mathbb{K}$-linear span of all elementary allowed $p$-paths as the space of allowed $p$-paths: 
  $$\Ascr_p(\Rsf(\Gscr)):=\mathbb K\mbox{-}\Span\{e_{i_0...i_p}: e_{i_j i_{j+1}}\in\Rsf(\Escr)\ \forall\ j\in [p-1]\}.$$
	\end{definition}
 \noindent Next, we define a linear operator on the allowed path spaces.
	\begin{definition}
		 The linear boundary operator $\partial_p: \Ascr_p(\Rsf(\Gscr))\to \Ascr_{p-1}(\Rsf(\Gscr))$ is a linear operator defined via its action on elementary paths: $\partial_p e_{i_0...i_p}=\sum_{k}(-1)^ke_{i_0...\widehat{i_k}...i_p}$
		and extended over $\Ascr_p(\Rsf(\Gscr))$ by linearity. Note that $\partial_p\equiv 0$.
		\end{definition}
	 \noindent Elementary allowed paths are further refined to exclude non-robust paths. The robust $k$-paths then become the basis set for the space of $k$-chains.	
		\begin{definition}\label{def:robustpath}
			(i) An allowed $e_{i_0i_1i_2}$ is a robust 2-path or a triangle in $\Gscr$ if there is a route from $i_0$ to $i_2$ that  evades atleast one route from $i_0$ to $i_2$ through $i_1$, \ie,  $\exists (\alpha, \beta) \in \Cscr(e_{i_0i_2})\times \Cscr(e_{i_0i_1})\cap\Cscr(e_{i_1 i_2})$  such that 
			$\Vscr_\alpha^{i_0\to i_2}\cap \Vscr_\beta=\{i_0, i_2\}$.
			 We then call the tuple of routes $(\Rscr_\alpha, \Rscr_\beta)$, a triangulating pair of the robust $2$-path $e_{i_0 i_1 i_2}$. We denote the set of all triangles (robust 2-paths) by $\Delta_2(\Rsf(\Gscr))$.\\
    (ii) An allowed $e_{i_0 ... i_p}$ is a robust p-path in $\Gscr$ if it is allowed and the 2-path  $e_{i_{k-1}i_{k}i_{k+1}}\in \Delta_2(\Rsf(\Gscr))$ for each $k\in [p-1]$. Denote the set of all robust p-paths by $\Delta_p(\Rsf(\Gscr))$.
		\end{definition}
 
		\begin{definition}\label{def:invpath}
		%Like before, we drop the subscript $p$ and use simply $\partial$ unless clear from context.
		(i)  The sets of 0-chains and 1-chains are respectively defined as $\Omega_0(\Rsf(\Gscr)):=\mathbb K$-$\Span\{\Vscr\}=\Ascr_0(\Rsf(\Gscr))$ and $\Omega_1(\Rsf(\Gscr)):=\mathbb K$-$\Span \{\Rsf(\Escr)\}=\Ascr_1(\Rsf(\Gscr))$.\\
  % \begin{equation}\label{eq:robust01path}
  % \Ascr_p(\Rsf(\Gscr))=\Omega_p(\Rsf(\Gscr))\ \forall\ p=0, 1.
  % \end{equation}
		(ii) The set of $p$-chains is defined as the $\mathbb K$-linear span of robust p-paths:  $\Omega_p(\Rsf(\Gscr))=\mathbb K$-$\Span\{\Delta_p(\Rsf(\Gscr))\}\subseteq\Ascr_p(\Rsf(\Gscr))$.
		%Alternatively, the iterative definition
		%$$\Omega_p(\Rsf(\Gscr))=\{v: \partial v\in\Omega_{p-1}(\Rsf(\Gscr))\}$$
		%works equivalently.
	\end{definition} 
 % \noindent \textit{Note:} $\Delta_p(\Rsf(\Gscr))\subset\Omega_p(\Rsf(\Gscr))$ for all $p\geq 2$.\\
 \noindent\emph{Note:} $e_{i_0...i_p}\in \Delta_{p}(\Rsf(\Gscr))\longleftrightarrow e_{i_0...i_p}\in \Omega_{p}(\Rsf(\Gscr))$.\\
 The following proposition sets up the desired homology of the introduced $k$-chain spaces $\{\Omega_k(\Rsf(\Gscr))\}_{k\in\Nbb_0}$.

\begin{proposition}\label{prop:homologysetup}
		For all $p\geq 1$, we have\\
		(i) $\partial_{p-1}\circ\partial_{p}=0$.\\
		 (ii) $\partial \Omega_p(\Rsf(\Gscr))\subseteq \Omega_{p-1}(\Rsf(\Gscr))$.\\
		 Consequently, we obtain the following chain complex
		$$\mathbb K \{0\} \xleftarrow{\partial_0} \Omega_0(\Rsf(\Gscr))\xleftarrow{\partial_1}...\xleftarrow{\partial_n} \Omega_n(\Rsf(\Gscr))\xleftarrow{\partial_{n+1}}\cdots$$
	\end{proposition}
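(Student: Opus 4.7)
The proposition has two parts: the classical identity $\partial^2 = 0$, and the closure of the robust-path subspace under $\partial$. I would treat these separately.

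For part (i), I would apply the standard boundary-of-boundary calculation on an elementary allowed path. Expanding $\partial_{p-1}(\partial_p e_{i_0 \ldots i_p})$ produces a double sum over ordered pairs of omitted indices. Splitting the sum into $j < k$ and $j > k$ and reindexing the latter, each $(p-2)$-face $e_{\ldots \widehat{i_j}\ldots\widehat{i_k}\ldots}$ appears exactly twice with opposite signs and cancels. Linearity extends the identity to all of $\mathcal{A}_p$ and, a fortiori, to $\Omega_p \subseteq \mathcal{A}_p$.

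Part (ii) carries the real content. The goal is to show that for each robust $p$-path $e_{i_0 \ldots i_p}$, every face $e_{i_0 \ldots \widehat{i_k} \ldots i_p}$ appearing in $\partial_p e_{i_0 \ldots i_p}$ is itself a robust $(p-1)$-path. Its consecutive triples come in two flavors: those inherited from consecutive triples of the original robust $p$-path (robust by hypothesis), and the two "new" triples $e_{i_{k-2} i_{k-1} i_{k+1}}$ and $e_{i_{k-1} i_{k+1} i_{k+2}}$ that bridge the removed vertex $i_k$. For the endpoint deletions $k \in \{0, p\}$, no such new triples arise, so the resulting face is a contiguous sub-path of $e_{i_0 \ldots i_p}$ and robustness follows immediately from Definition~\ref{def:robustpath}(ii).

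For interior deletions $1 \leq k \leq p-1$, the crux is to construct triangulating pairs for the two new triples from the triangulating data of the three surrounding original triples $e_{i_{k-2} i_{k-1} i_k}$, $e_{i_{k-1} i_k i_{k+1}}$, and $e_{i_k i_{k+1} i_{k+2}}$. For example, to triangulate $e_{i_{k-1} i_{k+1} i_{k+2}}$, I would seek a route from $i_{k-1}$ to $i_{k+2}$ disjoint (between these endpoints) from a route through $i_{k+1}$, built by splicing together appropriately selected sub-routes drawn from the given triangulating routes.

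I expect this splicing to be the principal obstacle. One must verify that the concatenated fragments give rise to a genuine route (i.e., that the resulting vertex sequence respects the partial order induced by $(r_\alpha)_{\alpha \in A}$ and satisfies~\eqref{eq:partialorderconsistent}), and that disjointness is preserved when combining sub-routes whose intersections are governed by~\eqref{eq:intersectiongraph}. In configurations where direct splicing fails to produce such a pair, the fallback strategy is to show that the corresponding non-robust faces do not actually survive in $\partial e_{i_0 \ldots i_p}$: they either lie outside $\mathcal{A}_{p-1}$ and are dropped, or they pair across different deletion indices $k$ so that their signed coefficients cancel. Either path reduces the claim to a finite combinatorial verification of triangulating pair construction under the hypotheses supplied by the robustness of the three surrounding triples.
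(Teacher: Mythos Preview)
Your overall architecture matches the paper's: part (i) is the standard double-sum cancellation, and part (ii) reduces to showing that the two ``bridge'' triples $e_{i_{k-2}i_{k-1}i_{k+1}}$ and $e_{i_{k-1}i_{k+1}i_{k+2}}$ are robust, by splicing sub-routes from the triangulating pairs of the adjacent original triangles. The paper proceeds exactly this way, treating the two bridge triples as claims (A) and (B) and constructing explicit triangulating pairs for each.

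The gap is your proposed fallback. Neither escape route is available. First, a face $e_{i_0\ldots\widehat{i_k}\ldots i_p}$ of a robust path is always allowed: the edge $e_{i_{k-1}i_{k+1}}$ lies in $\mathsf{R}(\mathcal{E})$ because any route through $i_{k-1}$ spliced with any route through $i_{k+1}$ (meeting at $i_k$) witnesses reachability, so nothing is ``dropped.'' Second, the $p+1$ faces of a single $e_{i_0\ldots i_p}$ are pairwise distinct vertex tuples and therefore linearly independent in $\mathcal{A}_{p-1}$; no signed cancellation can occur across deletion indices. So you must commit to the splicing construction and prove it always succeeds.

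The paper handles this by a case split on the intersection pattern of the two ``evading'' routes $\mathcal{R}_{\alpha_1}$ and $\mathcal{R}_{\alpha_2}$ coming from the adjacent triangles (for (A), the triangulating pairs of $e_{i_{k-2}i_{k-1}i_k}$ and $e_{i_{k-1}i_ki_{k+1}}$; only two triangles are needed per bridge triple, not three). Case~1 is when the relevant segments are disjoint, and the naive splice works; Case~2 is when they intersect in the form~\eqref{eq:intersectiongraph}, and one further subdivides on whether the last intersection segment of $\mathcal{R}_{\alpha_2}$ with $\mathcal{R}_{\beta_1}$ ends before or after the last intersection with $\mathcal{R}_{\alpha_1}$, rerouting through the appropriate crossover vertex. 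You correctly anticipate that this is ``the principal obstacle,'' but the resolution is purely constructive case analysis, not cancellation.
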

    \begin{figure*}[t]
		\centering
		\includegraphics[width=18cm]{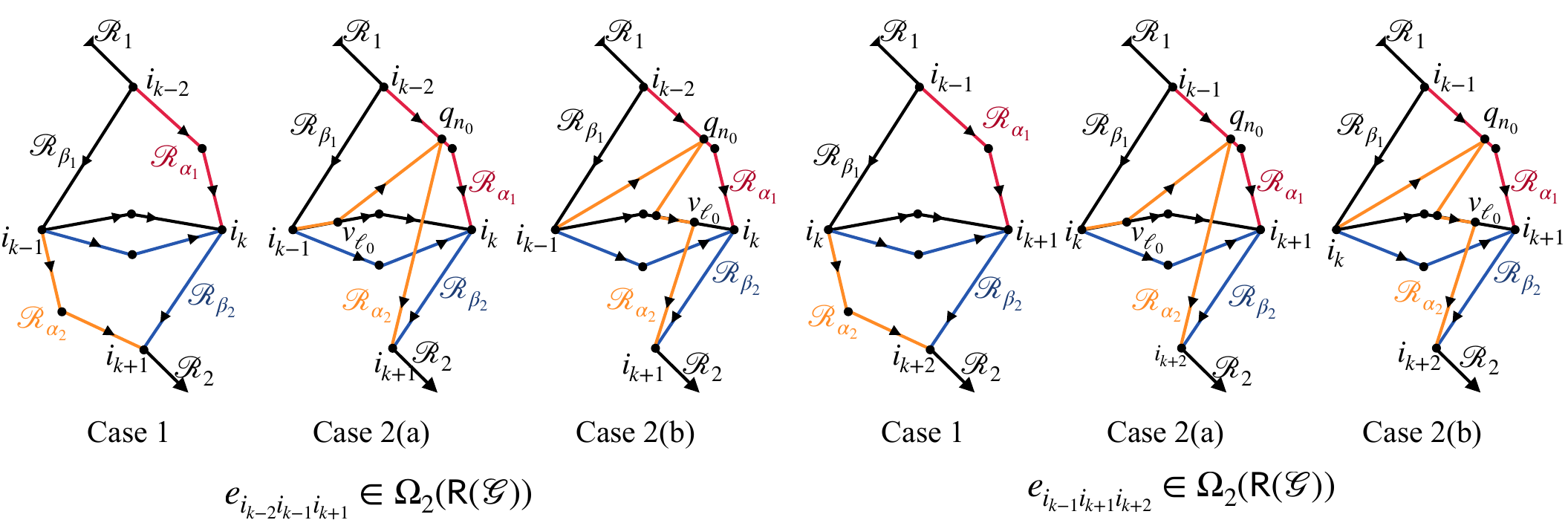}
		\caption{Representative Topologies for each case in the proof of the Proposition \ref{prop:homologysetup}}
		\label{fig:mainproof}
	\end{figure*}
	\begin{proof}
	(i) For an arbitrary $e_{i_0...i_p}$, 
		\begin{align*}
			\partial_{p-1}\circ\partial_p e_{i_0\cdots i_p}
			= &\sum_{r=0}^{q-1}\sum_{q} (-1)^{q+r} e_{i_0\cdots \widehat i_r\cdots \widehat i_q\cdots i_p}\\
			+ &\sum_{r=q+1}^{p}\sum_{q} (-1)^{q+r-1} e_{i_0\cdots \widehat i_q\cdots \widehat i_r\cdots i_p}=0.
			% =&\sum_{q<r} \left((-1)^{q+r}+(-1)^{q+r-1} \right)e_{i_0\cdots \widehat i_q\cdots \widehat i_r\cdots i_p}=0.
		\end{align*}
       Therefore Proposition \ref{prop:homologysetup}.(i) follows by linearity of $\partial_p$.\\
		(ii) -- For $p=0, 1$, $\partial\Omega_0(\Rsf(\Gscr))=\mathbb{K}\{0\}$ and $\partial\Omega_1(\Rsf(\Gscr))\subset \mathbb{K}\{\Vscr\}=\Omega_0(\Rsf(\Gscr))$ follow by definition as $\partial_0=0$, $\Omega_0(\Rsf(\Gscr))=\Ascr_0(\Rsf(\Gscr))$ and $\Omega_1(\Rsf(\Gscr))=\Ascr_1(\Rsf(\Gscr))$. \\
       For $p\geq 2$, we show that $e_{i_0...i_p}\in \Omega_p(\Rsf(\Gscr))\implies e_{i_0...\widehat{i_k}...i_p}\in \Omega_{p-1}(\Rsf(\Gscr))$ for all $k\in \{0,\cdots,p\}$ which implies $\partial_p\Omega_{p}(\Rsf(\Gscr))\subset \Omega_{p-1}(\Rsf(\Gscr))$ by linearity of $\partial_p$.\\
		-- For $p=2$, notice that $e_{i_0i_1i_2}\in\Omega_2(\Rsf(\Gscr))\subset \Ascr_2(\Rsf(\Gscr))\implies e_{i_0i_1}, e_{i_1i_2}\in \Rsf(\Escr)$ which implies the existence of routes $$\Rscr_a: e_{i_0 i_1}\in Sim(\Rscr_a); \Rscr_b: e_{i_1 i_2}\in Sim(\Rscr_b).$$ It follows that $e_{i_0 i_2}\in Sim(\Rscr_c)$ where $\Rscr_c=\Rscr_a^{o\to i_1}\cup\Rscr_b^{i_1\to d}$. Thereby $\partial e_{i_0 i_1 i_2}\in\Ascr_1(\Rsf(\Gscr))\equiv \Omega_1(\Rsf(\Gscr))$.\\
		 -- Now let $p\geq 3$. To show that $e_{i_0...\widehat i_k...i_p}\in\Omega_{p-1}(\Rsf(\Gscr))$ for each $k$, following Definition \ref{def:robustpath}, \ref{def:invpath}, it suffices to show \\
   (A) $e_{i_{k-2}i_{k-1}i_{k+1}}\in\Omega_{2}(\Rsf(\Gscr))$ for all $p-1 \geq k\geq 2$ and \\
   (B) $e_{i_{k-1} i_{k+1} i_{k+2}}\in\Omega_2(\Rsf(\Gscr))$ for all $p-2\geq k\geq 1$ \\
   since $e_{i_q i_{q+1} i_{q+2}}\in\Omega_2(\Rsf(\Gscr))$ for all $q\notin \{k-2, k-1\}$ follows from $e_{i_0...i_p}\in\Omega_p(\Rsf(\Gscr))$ by definition. 
   
   We establish each of (A) and (B) by explicitly constructing a triangulating pair for $e_{i_{k-2}i_{k-1}i_{k+1}}$ and $e_{i_{k-1} i_{k+1} i_{k+2}}$. We are required to consider three cases for each of the two constructive proofs. 
   We suggest that the reader refer to Figure \ref{fig:mainproof} to follow our proof with ease. 
   We will prove Case 1 of part (A) in detail and move rest to Appendix \ref{app:restcases}.\\
		(A)  $e_{i_{k-2}i_{k-1}i_{k+1}}\in\Omega_{2}(\Rsf(\Gscr))$ for all $p-1 \geq k\geq 2$.\\
		Note that since $e_{i_0...i_p}\in\Omega_p(\Rsf(\Gscr))$, $e_{i_{k-2}i_{k-1} i_k}$ and $e_{i_{k-1}i_{k} i_{k+1}}$ are both triangles. So let $(\Rscr_{\alpha_1}, \Rscr_{\beta_1})$ and $(\Rscr_{\alpha_2}, \Rscr_{\beta_2})$ be the respective triangulating pairs. \\
		Case 1: $\Rscr_{\alpha_1}^{i_{k-1}\to i_k}\cap \Rscr_{\alpha_2}^{i_{k-1}\to i_k}=\phi$. Let 
        \begin{align}
        \Rscr_\alpha &:=\Rscr_1^{o\to i_{k-2}}\cup\Rscr_{\alpha_1}^{i_{k-2}\to i_k}\cup \Rscr_{\beta_2}^{i_k\to i_{k+1}}\cup\Rscr_2^{i_{k+1}\to d},\\
		\Rscr_\beta &:=\Rscr_1^{o\to i_{k-2}}\cup\Rscr_{\beta_1}^{i_{k-2}\to i_{k-1}}\cup \Rscr_{\alpha_2}^{i_{k-1}\to i_{k+1}}\cup\Rscr_2^{i_{k+1}\to d}.\nonumber\label{eq:caseA1}
        \end{align}
		\textbf{Claim:} $(\Rscr_\alpha, \Rscr_\beta)$ is a triangulating pair for $e_{i_{k-2}i_{k-1}i_{k+1}}$. \\
  \textbf{Justification:} Clearly, $i_{k-2}, i_{k+1}\in \Rscr_\alpha$ so that $e_{i_{k-2 i_{k+1}}}\in \Rscr_\alpha \implies$ $\alpha\in\Cscr(e_{i_{k-2 i_{k+1}}})$. Similarly, we observe that $\beta\in\Cscr(e_{i_{k-2}i_{k-1}})\cap \Cscr(e_{i_{k-1}i_{k+1}})$. 
		Now let $j\in \Rscr_\alpha^{i_{k-2}\to i_{k+1}}=\Rscr_{\alpha_1}^{i_{k-2}\to i_k}\cup \Rscr_{\beta_2}^{i_k\to i_{k+1}}$. 
		If $j\in \Rscr_{\alpha_1}^{i_{k-2}\to i_k}$ and $j\neq i_{k-2}$, then $j\notin \Rscr_{\beta_1}^{i_{k-2}\to i_{k-1}}$ since $(\Rscr_{\alpha_1}, \Rscr_{\beta_1})$ is a triangulating pair for $e_{i_{k-2}i_{k-1}i_k}$ and $j\notin \Rscr_{\alpha_2}^{i_{k-1}i_{k+1}}$ by assumption of Case 1. On the other hand, if $j\in \Rscr_{\beta_2}^{i_{k}\to i_{k+1}}$ and $j\neq i_{k+1}$, then $j\notin \Rscr_{\beta_1}^{i_{k-2}\to i_{k-1}}$ since $j\geq i_k > i_{k-1}$ and $j\notin \Rscr_{\alpha_2}^{i_{k-1}i_{k+1}}$  since  $(\Rscr_{\alpha_2}, \Rscr_{\beta_2})$ is a triangulating pair for $e_{i_{k-1}i_{k}i_{k+1}}$. This establishes our claim.\\
		Case 2: $\Rscr_{\alpha_1}^{i_{k-1}\to i_k}\cap \Rscr_{\alpha_2}^{i_{k-1}\to i_k}=\bigcup_{n=1}^{n_0}\Rscr_{\alpha_1}^{p_n\to q_n}=\bigcup_{n=1}^{n_0}\Rscr_{\alpha_2}^{p_n\to q_n}\neq \phi$. (Recall Equation \eqref{eq:intersectiongraph}). Note that $i_k\notin \Rscr_{\alpha_2}$ and hence $q_{n_0}<i_k$.
			\end{proof}
	\section{Robust Paths in Series Parallel Graphs}\label{sec:spalgebra}
 We build a formal definition of a series-parallel graph in Subsection \ref{sec:spformal} and investigate chain complexes induced by them in Subsection \ref{sec:spcomplexes}. We find that $\dim\Omega_p(\Rsf(\Gscr))=0$ for all $p>2$ if and only if $\Gscr$ is a series-parallel graph which presents a notable correspondence between the developed homological algebra and emergent combinatorial complexity as $\Gscr$ deviates from a series-parallel topology. 
\subsection{Series-parallel two-terminal DAGs}\label{sec:spformal}
 We define parallel and series combinations below (Recall Figure \ref{fig:collection1}), following which an inductive definition for a series-parallel graph follows.  
	\begin{definition}\label{def:seriesparallel}
 Let $\Gscr_1=(\Vscr_1, \Escr_1, o_1, d_1)=\bigcup_i \Rscr^1_i$ and $\Gscr_2=(\Vscr_2, \Escr_2, o_2, d_2)=\bigcup_j \Rscr^2_j$ be two-terminal graphs.\\
		(i) If $\Gscr_1$ and $\Gscr_2$ satisfy $d_1=o_2, \Vscr_1\cap\Vscr_2=\{d_1\}$,  then a series combination of $\Gscr_1$ and $\Gscr_2$ is the two-terminal graph $\Gscr_1\to\Gscr_2:=\Gscr_1\cup\Gscr_2=(\Vscr, \Escr, o_1, d_2)=\bigcup_{i, j}\left( \Rscr^1_i\to\Rscr^2_j\right)$ where
		$\Vscr=\Vscr_1\cup \Vscr_2;\ \Escr=\Escr_1 \cup\Escr_2$.\\
		(ii) If $\Gscr_1$ and $\Gscr_2$ satisfy $o_1=o_2=:o$, $d_1=d_2=:d$ and  $\Vscr_1\cap\Vscr_2=\{o, d\}$,
		then, a parallel combination $\Gscr_1$ and $\Gscr_2$ is the two-terminal graph $\Gscr_1 || \Gscr_2:=\Gscr_1\cup\Gscr_2 = (\Vscr, \Escr, o, d)=\bigcup_{j\in \{1, 2\}i\in ([N_j])} \Rscr_i^j$ where
		$\Vscr=\Vscr_1\cup \Vscr_2;\ \Escr=\Escr_1\cup\Escr_2$.\\
  (iii) A two-terminal graph $\Gscr$ is a series-parallel graph if and only if 1) $\Gscr\cong K_{12}$ or
			2) $\Gscr\cong \Gscr_1\to\Gscr_2$ for series-parallel graphs $\Gscr_1$ and $\Gscr_2$ or 
			3) $\Gscr\cong \Gscr_1||\Gscr_2$ for series-parallel graphs $\Gscr_1$ and $\Gscr_2$.
	\end{definition}
	A series-parallel graph $\Gscr$ can hence be represented as a series and parallel combination of edges. For instance, the  graph $\Gscr_1$ in Figure \ref{fig:collection1}(a) expressible as follows in an `edge-combinatorial' representation:
	\begin{equation}
		\Gscr_1\to\Gscr_2= ((K_{i_0i_1}\to K_{i_1 i_3})||(K_{i_0i_2}\to K_{i_2 i_3})). \label{eq:algrep}
	\end{equation}
	
	\subsection{Path complexes of series-parallel graphs}\label{sec:spcomplexes}
	Given $\Gscr_1$ and $\Gscr_2$ along with their respectively induced chain complexes $\{\Omega_k(\Rsf(\Gscr_1))\}_{k\in\Nbb_0}$ and $\{\Omega_m(\Rsf(\Gscr_1))\}_{m\in\Nbb_0}$, we state what can be inferred about the complex induced by their combinations in the two propositions that follow.
	\begin{proposition}\label{prop:parallelhomology}
		Let $\Gscr_1=(\Vscr_1, \Escr_1, o, d)=\bigcup_{i\in [N_1]} \Rscr_i^1$ and $\Gscr_2=(\Vscr_2, \Escr_2, o, d)=\bigcup_{j\in[N_2]} \Rscr^2_j$, and, $\Gscr=(\Vscr, \Escr, o, d)=\Gscr_1 || \Gscr_2=\bigcup_{j\in \{1, 2\}i\in ([N_j])} \Rscr_i^j\equiv\bigcup_{\alpha\in [N]} \Rscr_\alpha$ be their parallel combination. Further, let $\Rsf(\Gscr_i)=(\Vscr_i, \Rsf(\Escr_i), \Cscr_i)$  for each $i=1, 2$ and $\Rsf(\Gscr)=(\Vscr, \Rsf(\Escr), \Cscr)$. Following relations then hold.\\
		(i) $\dim\Omega_0(\Rsf(\Gscr))=\dim\Omega_0(\Rsf(\Gscr_1))+\dim\Omega_0(\Rsf(\Gscr_2))-2.$\smallskip\\
		(ii)  $\dim \Omega_1(\Rsf(\Gscr))=\dim \Omega_1(\Rsf(\Gscr_1))+\dim \Omega_1(\Rsf(\Gscr_2))-1.$\smallskip\\
		(iii)  $\Omega_2(\Rsf(\Gscr))\supseteq \Omega_2(\Rsf(\Gscr_1))\cup \Omega_2(\Rsf(\Gscr_2))$.\smallskip\\
			(iv)  $\dim \Omega_p(\Rsf(\Gscr))=\dim \Omega_p(\Rsf(\Gscr_1))+\dim \Omega_p(\Rsf(\Gscr_2)), p>2$.
	\end{proposition}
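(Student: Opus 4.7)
The plan is to prove (i)--(iv) in order, with one structural principle powering the nontrivial parts. I will call it the \emph{confinement principle}: because $\Vscr_1\cap\Vscr_2=\{o,d\}$ and $\Gscr$ is acyclic with unique source $o$ and unique sink $d$, every $o$-to-$d$ route of $\Gscr$ lies entirely in $\Gscr_1$ or entirely in $\Gscr_2$; equivalently, every route of $\Gscr$ through a vertex $v\in\Vscr_k\setminus\{o,d\}$ is a route of $\Gscr_k$. Consequently, within $\Rsf(\Escr)$, a vertex $v\in\Vscr_k\setminus\{o,d\}$ can only reach (or be reached from) vertices of $\Vscr_k$.

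Part (i) is inclusion--exclusion on the basis of $\Omega_0$: $\dim\Omega_0(\Rsf(\Gscr))=|\Vscr_1|+|\Vscr_2|-|\Vscr_1\cap\Vscr_2|=|\Vscr_1|+|\Vscr_2|-2$. For (ii), the parallel-combination definition gives $\Rsf(\Escr)=\Rsf(\Escr_1)\cup\Rsf(\Escr_2)$, and any $e_{ij}\in\Rsf(\Escr_1)\cap\Rsf(\Escr_2)$ must have $i,j\in\Vscr_1\cap\Vscr_2=\{o,d\}$, so the intersection is exactly $\{e_{od}\}$; inclusion--exclusion on the basis then yields the dimension formula. Part (iii) is essentially definitional: a triangulating pair $(\Rscr_\alpha,\Rscr_\beta)$ for a robust $2$-path $e_{i_0i_1i_2}$ of $\Gscr_k$ is a pair of routes of $\Gscr_k$, which are also routes of $\Gscr$, and the disjointness condition $\Vscr_\alpha^{i_0\to i_2}\cap\Vscr_\beta=\{i_0,i_2\}$ is intrinsic to this pair and unaffected by embedding into $\Gscr$.

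Part (iv) is the substantive claim. Let $e_{i_0\cdots i_p}$ be a robust $p$-path in $\Gscr$ with $p\geq 3$. First, the interior indices $i_1,\ldots,i_{p-1}$ cannot equal $o$ or $d$, since each $e_{i_ki_{k+1}}\in\Rsf(\Escr)$ forces $i_k<i_{k+1}$ strictly, so no interior index is extremal in the chain $i_0<\cdots<i_p$. By the confinement principle, each edge $e_{i_ki_{k+1}}$ between two interior vertices forces them to lie in the same side, so there is a single $j\in\{1,2\}$ with $i_1,\ldots,i_{p-1}\in\Vscr_j\setminus\{o,d\}$. Next, for each $k\in\{1,\ldots,p-1\}$ pick a triangulating pair $(\Rscr_{\alpha_k},\Rscr_{\beta_k})$ of $e_{i_{k-1}i_ki_{k+1}}$: since $p\geq 3$, at least one of $i_{k-1},i_{k+1}$ is itself an interior index of the path and therefore lies in $\Vscr_j\setminus\{o,d\}$, and both $\Rscr_{\alpha_k}$ and $\Rscr_{\beta_k}$ pass through that vertex, so the confinement principle forces both routes into $\Gscr_j$. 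Consequently $e_{i_0\cdots i_p}$ is itself a robust $p$-path of $\Gscr_j$, giving $\Omega_p(\Rsf(\Gscr))=\Omega_p(\Rsf(\Gscr_1))+\Omega_p(\Rsf(\Gscr_2))$; and since for $p\geq 3$ any basis $p$-path has an interior vertex outside $\{o,d\}$ that distinguishes bases of $\Gscr_1$ from those of $\Gscr_2$, the sum is direct and the dimension formula follows.

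The main obstacle will be the endpoint case in the triangulating-route argument: specifically, ensuring that a route $\Rscr_{\alpha_k}$ passing only through $i_{k-1}$ and $i_{k+1}$ is confined to $\Gscr_j$ even when one of these two vertices is $o$ or $d$. The resolution, namely that at least one of $i_{k-1}$ and $i_{k+1}$ is always a non-terminal interior vertex when $p\geq 3$ and $k\in\{1,\ldots,p-1\}$, is the key structural observation; it fails precisely at $p=2$ and explains why (iii) degrades from equality to mere containment.
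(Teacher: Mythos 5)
Your proposal is correct and follows essentially the same route as the paper: inclusion--exclusion on the vertex and edge bases for (i)--(ii), transfer of triangulating pairs for (iii), and for (iv) the observation that when $p>2$ the interior vertices of a robust $p$-path avoid $\{o,d\}$, so the path and all routes witnessing its sub-triangles are confined to a single $\Gscr_j$, yielding a direct sum. Your explicit ``confinement principle'' and the careful treatment of the endpoint cases $k=1$ and $k=p-1$ make the argument cleaner than the paper's rather terse version of the reverse containment, but the underlying idea is identical; just note that the $\supseteq$ direction of the equality in (iv) still needs the (easy) remark that robust $p$-paths of $\Gscr_j$ remain robust in $\Gscr$, which your part (iii) argument already supplies.
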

	\begin{proof}
		(i) Follows since $\Omega_0(\Rsf(\Gscr))$ is a linear space spanned by all vertices of $\Rsf(\Gscr)$: $\dim \Omega_0(\Rsf(\Gscr))=|\Vscr_1\cup\Vscr_2|=|\Vscr_1|+|\Vscr_2|-2=|\dim\Omega_0(\Rsf(\Gscr_1))|+|\dim\Omega_0(\Rsf(\Gscr_2))|-2$.\\
		(ii) Follows since $\Omega_{1}(\Rsf(\Gscr))$ is a linear space spanned by all edges of $\Rsf(\Gscr)$: $\dim \Omega_1(\Rsf(\Gscr))=|\Rsf(\Escr_1)\cup\Rsf(\Escr_2)|=|\Rsf(\Escr_1)|+|\Rsf(\Escr_2)|-|\Rsf(\Escr_1)\cap\Rsf(\Escr_2)|=\dim \Omega_1(\Rsf(\Gscr_1))+\dim \Omega_1(\Rsf(\Gscr_2))-|\{e_{od}\}|=\dim \Omega_1(\Rsf(\Gscr_1))+\dim \Omega_1(\Rsf(\Gscr_2))-1$.\\
		(iii) Note that if $\Rscr\in \{\Rscr_\alpha\}_{\alpha\in [N_j]}$ for $j=1, 2$, then $\Rscr\in \{\Rscr_\alpha\}_{\alpha \in [N]}$. Thus, if a pair  $(\Rscr_{\alpha}, \Rscr_{\beta})$ triangulates $e_{i_0 i_1 i_2}$ in $\Rsf(\Gscr_j)$ for $j\in\{1, 2\}$, then it also triangulates the 2-path in $\Rsf(\Gscr)$. It follows that $\Delta_2(\Rsf(\Gscr_j))\subset \Delta_2(\Rsf(\Gscr))\implies\Omega_2(\Rsf(\Gscr_j))\subseteq \Omega_2(\Rsf(\Gscr))$ for each $j$, and thus, (iii) holds. \\
		(iv) Let $j\in \{1, 2\}$ and $e_{i_0...i_p}\in\Omega_p(\Rsf(\Gscr_j))$. Then $e_{i_{k-1} i_{k} i_{k+1}}\in\Omega_2(\Rsf(\Gscr_j))\implies e_{i_{k-1} i_{k} i_{k+1}}\in\Omega_2(\Rsf(\Gscr))$ for each $k\in [p-1]$ which in turn implies $e_{i_0...i_p}\in\Omega_p(\Rsf(\Gscr))$. Thus, $\Omega_p(\Rsf(\Gscr_1))\oplus\Omega_p(\Rsf(\Gscr_2))\subset \Omega_p(\Rsf(\Gscr))$. 
		On the other hand, if $e_{i_{k-1} i_k i_{k+1}}$ does not have a triangulating pair in $\Rsf(\Gscr_j)$, then it cannot not have one in $\Rsf(\Gscr)$ either unless $(i_{k-1}, i_{k+1})\neq (o, d)$ since then atleast one of $i_{k-1}$ and $i_{k+1}$ does not belong $\Gscr_{-j}$ (where $-j\in\{1, 2\}, j\neq -j$). Hence, $\Omega_p(\Rsf(\Gscr_1))\oplus\Omega_p(\Rsf(\Gscr_2))= \Omega_p(\Rsf(\Gscr))$ follows.
		Further, since $p>2$ and $\Vscr_1\cap \Vscr_2=\{o, d\}$, $\Omega_p(\Rsf(\Gscr_1))\perp\Omega_p(\Rsf(\Gscr_2))$ and the proposed follows.
	\end{proof}
	
	\begin{proposition}\label{prop:serieshomology}
		Let $\Gscr_1=(\Vscr_1, \Escr_1, o, h)$ and $\Gscr_2=(\Vscr_2, \Escr_2, h, d)$ be two two-terminal graphs and $\Gscr=\Gscr_1 \to \Gscr_2=(\Vscr, \Escr, o, d)$ be their series combination. Then, the following hold.\\
		(i) $\dim \Omega_0(\Rsf(\Gscr))=\dim \Omega_0(\Rsf(\Gscr_1))+\dim \Omega_0(\Rsf(\Gscr_2))-1.$\\
		(ii) $\dim \Omega_p(\Rsf(\Gscr))=\dim \Omega_p(\Rsf(\Gscr_1))+\dim \Omega_p(\Rsf(\Gscr_2))\ \forall p>1 .$
	\end{proposition}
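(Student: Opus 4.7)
Part (i) is immediate: $\Omega_0(\Rsf(\Gscr))$ is spanned by $\Vscr = \Vscr_1 \cup \Vscr_2$, and since $\Vscr_1 \cap \Vscr_2 = \{h\}$, inclusion--exclusion gives $|\Vscr| = |\Vscr_1| + |\Vscr_2| - 1$. For part (ii) my strategy is to exploit the fact that $h$ is a cut vertex of $\Gscr$: because $\Vscr_1 \cap \Vscr_2 = \{h\}$ and every $o$-to-$d$ route in $\Gscr$ is a concatenation of an $o$-to-$h$ route in $\Gscr_1$ with an $h$-to-$d$ route in $\Gscr_2$, every such route passes through $h$. The plan is to show that the basis of robust paths splits as $\Delta_p(\Rsf(\Gscr)) = \Delta_p(\Rsf(\Gscr_1)) \sqcup \Delta_p(\Rsf(\Gscr_2))$ for every $p \geq 2$, from which the dimension statement follows because any $p$-path shared by both sides would need all its $p+1 \geq 2$ distinct vertices confined to $\{h\}$, which is impossible.

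The inclusion $\supseteq$ is constructive. Given a robust $p$-path living in $\Gscr_j$ (say $j=1$) with a triangulating pair $(\Rscr^1_\alpha, \Rscr^1_\beta)$ for each constituent triangle, I would fix an arbitrary route $\Rscr^2_\gamma$ in $\Gscr_2$ and form the full routes $\Rscr_\alpha := \Rscr^1_\alpha \to \Rscr^2_\gamma$ and $\Rscr_\beta := \Rscr^1_\beta \to \Rscr^2_\gamma$ in $\Gscr$. For each triangle the segment $\Vscr_\alpha^{i_k \to i_{k+2}}$ still lies in $\Vscr_1 \setminus \{h\}$ (since $h$ has the largest rank on the $\Gscr_1$ side and the triangle endpoints come strictly before $h$), while the appended tail of $\Rscr_\beta$ lies in $\Vscr_2 \setminus \{h\}$, which is disjoint from $\Vscr_1 \setminus \{h\}$. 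Hence $\Vscr_\alpha^{i_k \to i_{k+2}} \cap \Vscr_\beta = \{i_k, i_{k+2}\}$ is inherited in $\Gscr$.

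The forward inclusion $\subseteq$ is the crux. For $p = 2$, suppose $e_{i_0 i_1 i_2} \in \Delta_2(\Rsf(\Gscr))$ has vertices on both sides of the cut; by the underlying order on $\Gscr$ one may arrange $i_0 \in \Vscr_1 \setminus \{h\}$ and $i_2 \in \Vscr_2 \setminus \{h\}$. Then for any triangulating pair $(\Rscr_\alpha, \Rscr_\beta)$ the segment $\Vscr_\alpha^{i_0 \to i_2}$ must contain $h$ (as it crosses the cut), and so does $\Vscr_\beta$ (as a full $o$-to-$d$ route), giving $h \in \Vscr_\alpha^{i_0 \to i_2} \cap \Vscr_\beta = \{i_0, i_2\}$, a contradiction. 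Boundary subcases where $h$ coincides with $i_0$ or $i_2$ place the entire triangle on a single side of the cut by acyclicity, while $i_1 = h$ yields the same contradiction via $i_1 \in \Vscr_\alpha^{i_0 \to i_2} \cap \Vscr_\beta$. For $p \geq 3$, the result bootstraps: any two adjacent constituent triangles $e_{i_{k-1} i_k i_{k+1}}$ and $e_{i_k i_{k+1} i_{k+2}}$ share the two distinct vertices $i_k, i_{k+1}$, so they must lie on the same side (otherwise $\{i_k, i_{k+1}\} \subseteq \Vscr_1 \cap \Vscr_2 = \{h\}$, impossible), and induction along the path places all vertices on one side. I expect the hardest step to be the careful handling of the $p=2$ subcases in which $h$ itself appears among the $i_j$; all remaining work is bookkeeping with acyclicity and the cut-vertex property.
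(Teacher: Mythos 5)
Your proof is correct and follows essentially the same route as the paper's: the key observation in both is that $h$ lies on every $o$-to-$d$ route of $\Gscr$, so no robust $2$-path can straddle the cut vertex, forcing $\Delta_p(\Rsf(\Gscr))$ to split as the disjoint union of $\Delta_p(\Rsf(\Gscr_1))$ and $\Delta_p(\Rsf(\Gscr_2))$ and hence $\Omega_p(\Rsf(\Gscr))=\Omega_p(\Rsf(\Gscr_1))\oplus\Omega_p(\Rsf(\Gscr_2))$. You are in fact somewhat more explicit than the paper about the reverse inclusion (extending triangulating pairs of $\Gscr_1$ by a fixed tail route of $\Gscr_2$) and the bootstrapping to $p\geq 3$, which the paper leaves implicit.
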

	\begin{proof}
		(i) Follows since $\Omega_0(\Rsf(\Gscr))$ is a linear space spanned by all vertices of $\Rsf(\Gscr)$: $\dim \Omega_0(\Rsf(\Gscr))=|\Vscr_1\cup\Vscr_2|=|\Vscr_1|+|\Vscr_2|-1=|\dim\Omega_0(\Rsf(\Gscr_1))|+|\dim\Omega_0(\Rsf(\Gscr_2))|-1$.\\
		(ii) For $p=1$ follows since $\Omega_1(\Rsf(\Gscr))$ is a linear space spanned by all edges of $\Rsf(\Gscr)$: $\dim \Omega_1(\Rsf(\Gscr))=|\Escr_1\cup\Escr_2|=|\Vscr_1|+|\Vscr_2|=|\dim\Omega_1(\Rsf(\Gscr_1))|+|\dim\Omega_1(\Rsf(\Gscr_2))|$. Now notice that no pair of routes can triangulate $e_{j h k}$ for all $j, k\in \Vscr$ with $j<h<k$ since $h$ belongs to every route of $\Gscr$ by definition. Thus $e_{i_0...i_p}\in\Omega_p(\Rsf(\Gscr))$ requires $h\leq i_0$ or $h\geq i_p$ which is equivalent to $e_{i_0...i_p}\in \Omega_p(\Rsf(\Gscr_2))$ or $e_{i_0...i_p}\in \Omega_p(\Rsf(\Gscr_1))$ respectively. Thus, $\Omega_p(\Rsf(\Gscr))=\Omega_p(\Rsf(\Gscr_1))\oplus\Omega_p(\Rsf(\Gscr_2))$ and the proposed follows.
	\end{proof}
\noindent We are now in a position to establish our main result. 
\begin{theorem}\label{thm:spbifurcation}
	Let $\Gscr=(\Vscr, \Escr, o, d)=\cup_{\alpha\in A} \Rscr_\alpha$ be a two terminal DAG. Then,\\
	(i) If $\Gscr$ is a series-parallel graph, then $\dim \Omega_p(\Gscr)=0$ for all $p\geq 3$.\\
	(ii) If $\Gscr$ is not a series-parallel graph, then $\dim\Omega_3(\Rsf(\Gscr))>0$.
\end{theorem}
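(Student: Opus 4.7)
The plan is to prove the two parts separately: part (i) by a structural induction using the propositions of Section \ref{sec:spcomplexes}, and part (ii) by invoking the known combinatorial characterization of non-series-parallel DAGs and explicitly exhibiting a robust $3$-path.

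For part (i), I would induct on the recursive construction of Definition \ref{def:seriesparallel}. The base case $\Gscr \cong K_{12}$ is immediate, since $\Rsf(\Gscr)$ has only two vertices and hence admits no elementary allowed $p$-path for any $p \geq 2$; in particular $\dim\Omega_p(\Rsf(\Gscr)) = 0$ for every $p \geq 3$. For the inductive step, suppose $\Gscr \cong \Gscr_1 \to \Gscr_2$ or $\Gscr \cong \Gscr_1 || \Gscr_2$ with both $\Gscr_1$ and $\Gscr_2$ series-parallel and satisfying $\dim\Omega_p(\Rsf(\Gscr_i)) = 0$ for $p \geq 3$, $i \in \{1,2\}$. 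The series case is closed by Proposition \ref{prop:serieshomology}(ii), which applies for all $p > 1$, and the parallel case by Proposition \ref{prop:parallelhomology}(iv), which applies for all $p > 2$; both deliver $\dim\Omega_p(\Rsf(\Gscr)) = 0$ for $p \geq 3$, completing the induction.

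For part (ii), I would invoke the Milchtaich--Chen characterization cited in the introduction \cite{milchtaich2006topology, chen2016netchar}: if $\Gscr$ fails to be series-parallel, then $\Gscr$ contains a Braess embedding on a tuple of vertices $(i_0, i_1, i_2, i_3)$, that is, five internally disjoint directed paths in $\Gscr$ realizing the arcs $i_0 \to i_1$, $i_0 \to i_2$, $i_1 \to i_2$, $i_1 \to i_3$, $i_2 \to i_3$ (as in Figure \ref{fig:collection1}(b)). Since every vertex of the two-terminal DAG lies on some $o$-to-$d$ route, these paths extend via an $o$-to-$i_0$ prefix and an $i_3$-to-$d$ suffix to produce three $o$-to-$d$ routes in $\Gscr$: $\Rscr_a$ traversing $i_0 \to i_1 \to i_2 \to i_3$ through the crossbar, $\Rscr_b$ traversing $i_0 \to i_2 \to i_3$ and avoiding $i_1$, and $\Rscr_c$ traversing $i_0 \to i_1 \to i_3$ and avoiding $i_2$. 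I would then verify that $(\Rscr_b, \Rscr_a)$ is a triangulating pair for $e_{i_0 i_1 i_2}$ and that $(\Rscr_c, \Rscr_a)$ is a triangulating pair for $e_{i_1 i_2 i_3}$, so that $e_{i_0 i_1 i_2 i_3} \in \Delta_3(\Rsf(\Gscr)) \subseteq \Omega_3(\Rsf(\Gscr))$, yielding $\dim\Omega_3(\Rsf(\Gscr)) \geq 1$.

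The hard part will be the disjointness bookkeeping in part (ii): one must verify that $\Vscr_b^{i_0\to i_2}$ meets $\Vscr_a$ only at $\{i_0, i_2\}$, and that $\Vscr_c^{i_1\to i_3}$ meets $\Vscr_a$ only at $\{i_1, i_3\}$. These follow by combining the internal disjointness of the five Braess paths with the acyclicity of $\Gscr$: the topological order forces any internal vertex of a Braess path from, say, $i_0$ to $i_2$ to lie strictly between $i_0$ and $i_2$, so it cannot coincide with vertices in the $o$-to-$i_0$ prefix or the $i_3$-to-$d$ suffix of the other extended route, while internal disjointness of the embedding's component paths rules out coincidences with the remaining Braess segments. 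A remaining subtlety is to choose the prefix and suffix so that they themselves do not re-enter the interior of the embedding, which is again a direct consequence of the DAG structure since the interior vertices have strictly intermediate topological rank.
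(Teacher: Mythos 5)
Your part (i) is essentially the paper's argument: the paper unrolls the same structural induction into the single identity $\dim\Omega_p(\Rsf(\Gscr))=\sum_{e_{ij}\in\Escr}\dim\Omega_p(\Rsf(K_{ij}))=0$ via Propositions \ref{prop:parallelhomology}(iv) and \ref{prop:serieshomology}(ii), so there is nothing to compare there. Part (ii), however, takes a genuinely different route. You import the external characterization of \cite{milchtaich2006topology, chen2016netchar} (non-series-parallel $\Rightarrow$ Braess/Wheatstone embedding) and then translate the embedding into a robust $3$-path $e_{i_0i_1i_2i_3}$ by extending the five internally disjoint embedding paths to three full $o$--$d$ routes and checking the two triangulating conditions $\Vscr_b^{i_0\to i_2}\cap\Vscr_a=\{i_0,i_2\}$ and $\Vscr_c^{i_1\to i_3}\cap\Vscr_a=\{i_1,i_3\}$; your disjointness bookkeeping via internal disjointness plus the topological order is sound, and completeness of the route enumeration guarantees the concatenated paths are genuine routes. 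The paper instead gives a self-contained proof: it decomposes $\Gscr$ down to an indecomposable two-terminal core $\Gscr'\neq K_{o'd'}$ and, by a two-case analysis on whether some robust $2$-path has $o',d'$ as its endpoints, shows that indecomposability forces a robust $3$-path. What your route buys is brevity and a clean conceptual link to the Braess literature (indeed, the translation you perform is essentially the converse construction the paper itself sketches in its final subsection). What the paper's route buys is independence from the cited combinatorial theorem --- important because the paper presents Theorem \ref{thm:spbifurcation} as an algebraic \emph{re-derivation} of that known result, so leaning on it would weaken that claim even though it is not circular --- together with an explicit localization of the robust $3$-path inside the indecomposable core. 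One caveat if you keep your version: you must confirm that the cited characterization is stated for directed two-terminal DAGs with exactly the paper's embedding notion (edge/vertex deletion followed by contraction of degree-two vertices), since mismatches in the embedding convention would need to be reconciled before the five-path structure can be asserted.
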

\begin{proof}
 (i) If $\Gscr$ is a series-parallel graph, then using Propositions \ref{prop:parallelhomology}.(iv), \ref{prop:serieshomology}.(ii), and the edge combinational representation of $\Gscr=(\Vscr, \Escr, o, d)$, we deduce
 \begin{multline}
\dim\Omega_p(\Rsf(\Gscr))=\sum_{e_{ij}\in \Escr} \dim\Omega_p(\Rsf(K_{ij}))=0\ \forall\ p\geq 3.
 \end{multline}
 (ii) If $\Gscr$ is not series-parallel, then sequentially decomposing $\Gscr$ serially and/or parallely one eventually arrives at a two-terminal subgraph $\Gscr'=(\Vscr', \Escr', o', d')\neq K_{o'd'}$ which is not decomposable further. Since $\Gscr'\subseteq \Gscr$, we have an $A'\subseteq A$ such that $\Gscr'=\cup_{\alpha\in A'} \Rscr_\alpha^{o'\to d'}$. Note that $|A'|>1$ since $|A'|=1$ implies $\Gscr'$ is a single $o'\to d'$ route which is serially decomposable by definition.\\
 \emph{Case 1:} We show that if $o', d'$ are vertices of some robust 2-path in $\Rsf(\Gscr')$, then $\Gscr'$ is parallely decomposable without a robust 3-path in it. Appendix \ref{app:case1p2thm} sketches the complete proof.\\
\emph{Case 2:} We will show that if there is  no robust 2-path in $\Gscr'$ with $o', d'$ as its vertices, then there must be a robust 3-path in $\Gscr'$ to avoid (serial/parallel) decomposability  of $\Gscr'$. See Appendix \ref{app:case2p2thm} for complete proof of this case.
The proof thereby rests as we show that if $\Gscr'$ is indecomposable as supposed, then it must contain a robust 3-path and thereby $\Gscr$ contains one too. 
\end{proof}
\subsection{Robust 3-paths and the Braess paradox}
We say that a graph $\Hscr$ is embedded in a graph $\Gscr$ if upon deletion of suitable edges and vertices in $\Gscr$, and subsequent merging of edges $e_{ij}$ and $e_{jk}$ in the graph obtained upon the deletion into a single edge $e_{ik}$, the result is a graph $\Gscr'$ that is isomorphic to $\Hscr$. For example, for $\Gscr$ in Figure \ref{fig:dim3unnecessary}, deleting $e_{35}, e_{5d}$ and merging the pairs $e_{o1}, e_{12}$ to $e_{o2}$ and $e_{o3}, e_{34}$ to $e_{o4}$ yields a graph isomorphic to the Braess embedding in Figure \ref{fig:collection1}(b) with $(i_0, i_1, i_2, i_3)\cong (o,2, 4, d)$. 

Any robust 3-path $e_{i_{k-2}i_{k-1} i_{k} i_{k+1}}$ in $\Gscr$ is induced as one of the three cases in Figure \ref{fig:mainproof}. In Case 1, the vertex tuple $(i_{k-2}, i_{k-1}, i_{k}, i_{k+1})$ induces a Braess embedding. Similarly, in Case 2(a), the vertex tuple $(i_{k-2}, v_{\ell_0}, q_{n_0}, i_{k+1})$ induces an embedding and in Case 2(b), the tuple $(i_{k-2}, i_{k-1}, q_{n_0}, i_{k+1})$ does so. A robust $p$-path $e_{i_0...i_p}$ with $p>3$ contains ${p \choose 4}$ robust 3-paths within itself and hence identifies a large collection of Braess-susceptible sites. Conversely, if $(i_0, i_1, i_2, i_3)$ induce a Braess embedding in $\Gscr$, then reintroduction of all deleted edges and vertices in the embedding reconstructs $\Gscr$ that contains the robust 3-path $e_{i_0i_1i_2i_3}$ with the structure shown by Case 1 in Figure \ref{fig:mainproof}.

	\section{Conclusion}\label{sec:conclusion}
In this paper, we introduced the notion of a $k$-robust path in a DAG $\Gscr$, which localizes the increasing deviation of a graph $\Gscr$ from a series-parallel topology for increasing values of $k$. We showed that the association of the $\mathbb K$-linear spaces of robust $k$-paths with $k$-chains in a chain complex sets up a consistent graph homology. We established that the topological simplicity of series-parallel graphs translates into a triviality of $k$-chains in the induced complex for $k\geq 3$, and any non-triviality therein deviates the graph from the simple topology. We further discussed the resulting correspondence between the space of $3$-chains and Braess-susceptible sites within a network. With this discussion serving as an illustrative example, we believe that the graph homology developed with robust paths as its basis will be a useful tool for the systematic characterization of complex behavior in flow networks.
	\bibliographystyle{unsrt}
	\bibliography{ref}

 \appendix
 
 \subsection{Proof of Proposition \ref{prop:homologysetup} continued.}\label{app:restcases}
 We list rest of the cases below and construct respective triangulating pairs alongside.
\noindent (A)  $e_{i_{k-2}i_{k-1}i_{k+1}}\in\Omega_{2}(\Rsf(\Gscr))$ for all $p-1 \geq k\geq 2$.\\
		-- Case 2(a): $\Rscr_{\alpha_2}^{i_{k-1}\to i_{k+1}}\cap \Rscr_{\beta_1}^{i_{k-1}\to i_{k}}=\bigcup_{\ell=1}^{\ell_0}  \Rscr_{\alpha_2}^{u_\ell \to v_\ell}=\bigcup_{\ell=1}^{\ell_0}  \Rscr_{\beta_1}^{u_\ell \to v_\ell}$ and $v_{\ell_0}<q_{n_0}$.Let 
		$$\Rscr_\alpha:=\Rscr_1^{o\to i_{k-2}}\cup\Rscr_{\alpha_1}^{i_{k-2}\to q_{n_0}}\cup \Rscr_{\alpha_2}^{q_{n_0}\to i_{k+1}}\cup\Rscr_2^{i_{k+1}\to d},$$
		$$\Rscr_\beta:=\Rscr_1^{o\to i_{k-2}}\cup\Rscr_{\beta_1}^{i_{k-2}\to i_{k}}\cup \Rscr_{\beta_2}^{i_{k}\to i_{k+1}}\cup\Rscr_2^{i_{k+1}\to d}.$$
		% \textbf{Claim:} $(\Rscr_\alpha, \Rscr_\beta)$ is a triangulating pair of $e_{i_{k-2}i_{k-1}i_{k+1}}$. \\
  % \textbf{Justification:} As before, it is straightforward to verify $(\alpha, \beta)\in \Cscr(e_{i_{k-2}i_{k+1}})\times  \Cscr(e_{i_{k-2}i_{k-1}})\cap  \Cscr(e_{i_{k-1}i_{k+1}})$ $($Note that $i_{k-1}\in\Rscr_{\beta_1}^{i_{k-2}\to i_k})$. Now if $j\in \Rscr_{\alpha_1}^{i_{k-2}\to q_{n_0}}$ and $j\neq i_{k-2}$, then $j\notin \Rscr_{\beta_1}^{i_{k-2}\to i_k}$ since $(\Rscr_{\alpha_1}, \Rscr_{\beta_1})$ is a triangulating pair for $e_{i_{k-2}i_{k-1}i_k}$ and $j\notin \Rscr_{\beta_2}^{i_k\to i_{k+1}}$ since $j\leq q_{n_0}<k$. Otherwise, if $j\in \Rscr_{\alpha_2}^{q_{n_0}\to i_{k+1}}$ with $j\neq i_{k+1}$, $j\notin \Rscr_{\beta_1}^{i_{k-2}\to i_{k}}$ since  $q_{n_0}> v_{\ell_0}$ so that $\Rscr_{\alpha_2}^{q_{n_0}\to i_{k+1}}\cap \Rscr_{\beta_1}^{i_{k-2}\to i_{k}}=\phi$ and $j\notin \Rscr_{\beta_2}^{i_k\to i_{k+1}}$ since $(\Rscr_{\alpha_2}, \Rscr_{\beta_2})$ is a triangulating pair for $e_{i_{k-1}i_{k}i_{k+1}}$.\\
			-- Case 2(b): $\Rscr_{\alpha_2}^{i_{k-1}\to i_{k+1}}\cap \Rscr_{\beta_1}^{i_{k-1}\to i_{k}}=\bigcup_{\ell=1}^{\ell_0}  \Rscr_{\alpha_2}^{u_\ell \to v_\ell}=\bigcup_{\ell=1}^{\ell_0}  \Rscr_{\beta_1}^{u_\ell \to v_\ell}$ and $v_{\ell_0}>q_{n_0}$.
			Let
			$$\Rscr_\alpha:=\Rscr_1^{o\to i_{k-2}}\cup\Rscr_{\alpha_1}^{i_{k-2}\to i_k}\cup \Rscr_{\beta_2}^{i_k\to i_{k+1}}\cup\Rscr_2^{i_{k+1}\to d},$$
		$$\Rscr_\beta:=\Rscr_1^{o\to i_{k-2}}\cup\Rscr_{\beta_1}^{i_{k-2}\to v_{\ell_0}}\cup \Rscr_{\alpha_2}^{v_{\ell_0}\to i_{k+1}}\cup\Rscr_2^{i_{k+1}\to d}.$$
	%\textbf{Claim:} $(\Rscr_\alpha, \Rscr_\beta)$ is a triangulating pair of $e_{i_{k-2}i_{k-1}i_{k+1}}$. \\
		% \textbf{Justification:} Again, it is straightforward to verify $(\alpha, \beta)\in \Cscr(e_{i_{k-2}i_{k+1}})\times  \Cscr(e_{i_{k-2}i_{k-1}})\cap  \Cscr(e_{i_{k-1}i_{k+1}})$ $($Note that $i_{k-1}\in\Rscr_{\beta_1}^{i_{k-2}\to v_{\ell_0}})$. Now, if $i_{k-2}\neq j\in\Rscr_{\alpha_1}^{i_{k-2}\to i_k}$,  then $j\notin \Rscr_{\beta_1}^{i_{k-2}\to v_{\ell_0}}$ since $(\Rscr_{\alpha_1}, \Rscr_{\beta_1})$ is a triangulating pair for $e_{i_{k-2}i_{k-1}i_k}$ and $j\notin \Rscr_{\alpha_2}^{v_{\ell_0\to i_{k+1}}}$ since $\Rscr_{\alpha_2}^{v_{\ell_0\to i_{k+1}}}\cap \Rscr_{\alpha_1}^{i_{k-2}\to i_k}=\phi$ follows from the assumption $v_{\ell_0}> q_{n_0}$ of the case. Otherwise, if $i_{k+1}\neq j\in\Rscr_{\beta_2}^{i_k\to i_{k+1}}$, then $j\notin \Rscr_{\beta_1}^{i_{k-2}\to v_{\ell_0}}$ since $j\geq i_k>v_{\ell_0}$ and $j\notin \Rscr_{\alpha_2}^{v_{\ell_0}\to i_{k+1}}$ since  $(\Rscr_{\alpha_2}, \Rscr_{\beta_2})$ is a triangulating pair for $e_{i_{k-1}i_{k}i_{k+1}}$. This establishes our claim and subsequently proves that $e_{i_{k-2}i_{k-1} i_{k+1}}$ is a triangle. \\
		(B) $e_{i_{k-1} i_{k+1} i_{k+2}}\in\Omega_2(\Rsf(\Gscr))$ for all $p-2\geq k\geq 1$.\\
		 As before, note that $e_{i_{k-1}i_{k}i_{k+1}}$ and $e_{i_{k}i_{k+1}i_{k+2}}$ are both triangles owing to $e_{i_0...i_p}\in\Omega_p(\Rsf(\Gscr))$ and let $(\Rscr_{\alpha_1}, \Rscr_{\beta_1})$ and $(\Rscr_{\alpha_2}, \Rscr_{\beta_2})$ be the respective triangulating pairs. Our proof scheme remains exactly as before. We will construct a triangulating pair $(\Rscr_\alpha, \Rscr_\beta)$ for the 2-path $e_{i_{k-1}i_{k+1}i_{k+2}}$.\\ %As before, our analysis splits into cases, and we state the triangulating pair in each case. Justifications for the validity of stated pairs can be traced similarly as in (A) with the aid of Figure \ref{fig:mainproof}. \\ 
		Case 1: $\Rscr_{\alpha_1}^{i_{k-1}\to i_{k+1}}\cap \Rscr_{\alpha_2}^{i_{k}\to i_{k+2}}=\phi$.
		$$\Rscr_\alpha= \Rscr_1^{o\to i_{k-1}}\cup\Rscr_{\beta_1}^{i_{k-1}\to i_{k}}\cup \Rscr_{\alpha_2}^{i_k\to i_{k+2}}\cup\Rscr_2^{i_{k+2\to d}},$$
			$$\Rscr_\beta= \Rscr_1^{o\to i_{k-1}}\cup\Rscr_{\alpha_1}^{i_{k-1}\to i_{k+1}}\cup \Rscr_{\beta_2}^{i_{k+1}\to i_{k+2}}\cup\Rscr_2^{i_{k+2\to d}}.$$
			Case 2: $\Rscr_{\alpha_1}^{i_{k-1}\to i_{k+1}}\cap \Rscr_{\alpha_2}^{i_{k}\to i_{k+2}}=\bigcup_{n=1}^{n_0} \Rscr_{\alpha_1}^{p_n\to q_n}=\bigcup_{n=1}^{n_0} \Rscr_{\alpha_2}^{p_n\to q_n}$.\\
			(a)  $\Rscr_{\alpha_2}^{i_{k}\to i_{k+2}}\cap \Rscr_{\beta_1}^{i_{k}\to i_{k+1}}=\bigcup_{\ell=1}^{\ell_0} \Rscr_{\alpha_2}^{u_\ell \to v_\ell}$, $v_{\ell_0}< q_{\ell_0}$.
			$$\Rscr_\alpha=\Rscr_1^{o\to i_{k-1}}\cup \Rscr_{\alpha_1}^{i_{k-2}\to q_{n_0}}\cup \Rscr_{\alpha_2}^{q_{n_0}\to i_{k+2}}\cup \Rscr_{2}^{i_{k+2}\to d},$$
				$$\Rscr_\beta=\Rscr_1^{o\to i_{k-1}}\cup \Rscr_{\beta_1}^{i_{k-1}\to _{k+1}}\cup \Rscr_{\beta_2}^{i_{k+1}\to i_{k+2}}\cup \Rscr_{2}^{i_{k+2}\to d}.$$
				(b) $\Rscr_{\alpha_2}^{i_{k}\to i_{k+2}}\cap \Rscr_{\beta_1}^{i_{k}\to i_{k+1}}=\bigcup_{\ell=1}^{\ell_0} \Rscr_{\alpha_2}^{u_\ell \to v_\ell}$, $v_{\ell_0}> q_{\ell_0}$.
				$$\Rscr_\alpha=\Rscr_1^{o\to i_{k-1}}\cup \Rscr_{\beta_1}^{i_{k-2}\to v_{\ell_0}}\cup \Rscr_{\alpha_2}^{v_{\ell_0}\to i_{k+2}}\cup \Rscr_{2}^{i_{k+2}\to d},$$
				$$\Rscr_\beta=\Rscr_1^{o\to i_{k-1}}\cup \Rscr_{\alpha_1}^{i_{k-1}\to _{k+1}}\cup \Rscr_{\beta_2}^{i_{k+1}\to i_{k+2}}\cup \Rscr_{2}^{i_{k+2}\to d}.$$
				It follows that $e_{i_0..._p}\in\Omega_{p}(\Rsf(\Gscr))\implies e_{i_0...\widehat i_k...i_p}\in\Omega_{p-1}(\Rsf(\Gscr))$. This completes the proof of the proposition.
    \subsection{Proof of Theorem \ref{thm:spbifurcation} continued}
    \subsubsection{Case 1, Part (ii)}\label{app:case1p2thm}
     Suppose that $\exists\ k\in \Vscr': e_{o'kd'}\in\Omega_2(\Rsf(\Gscr'))$ and let $(\Rscr_\alpha^{o'\to d'}, \Rscr_\beta^{o'\to d'})$ be the corresponding triangulating pair whereby $\Rscr_\alpha^{o'\to d'}\cap \Rscr_\beta^{o'\to d'}=\{o', d'\}$ (Recall Definition \ref{def:invpath}).
 Define a subset $D'\subset A'$:
$D'=\{\delta: \Rscr_\delta^{o'\to d'}\cap \Rscr_\alpha^{o'\to d'}=\{o', d'\}$
 and $D^{\prime c}=A'\setminus D'$. Note that $\beta\in D'$ and $\alpha\in D'^c$ so that $D', D'^c$ are both non-empty. Now consider the following two-terminal graphs induced by the partition $\{D', D'^c\}$:
 $$\Gscr'_{D'}=\bigcup_{D'}\Rscr_\delta^{o'\to d'}=(\Vscr'_{D'},\Escr'_{D'}, o', d');$$
 $$\Gscr'_{D'^c}=\bigcup_{D'^c}\Rscr_\delta^{o'\to d'}=(\Vscr'_{D'^c},\Escr'_{D'^c}, o', d').$$
 If $\Vscr_{D'}'\cap \Vscr'_{D'^c}=\{o', d'\}$
then $\Gscr'=\Gscr_{D'}||\Gscr_{D'^c}$ which contradicts the supposition that $\Gscr'$ is not decomposable parallely. Otherwise if $ j\in \Vscr_{D'}'\cap \Vscr'_{D'^c}, j\notin \{o', d'\}$ then $j\in \Rscr_{\delta_0}^{o'\to d'}$ for some $\delta_0\in D'^c$ and $j\notin \Rscr_\alpha^{o'\to d'}$. Further, choose any $\gamma\in D'$ with $j\in\Rscr_\gamma^{o'\to d'}$ ($\gamma$ exists since $j\in\Vscr_{D'}$). Then, atleast one of the following two vertices exist outside $\{o', d', j\}$:
$$\ell_1=\min{\Rscr_{\delta_0}^{j\to d'}\cap \Rscr_\alpha^{o'\to d'}},\ \ell_2=\max{\Rscr_{\delta_0}^{o'\to j}\cap \Rscr_\alpha^{o'\to d'}}.$$
If $\ell_1$ exists outside $\{o', d', j\}$ then $e_{o'j\ell_1 d'}\in\Omega_3(\Rsf(\Gscr'))$ as $e_{o' j \ell_1}$ is triangulated by the pair $(\Rscr^{o'\to d'}_a, \Rscr^{o'\to d'}_b)$ given by 
$$\Rscr^{o'\to d'}_a=\Rscr_\alpha^{o'\to d'}; \ \Rscr^{o'\to d'}_b=\Rscr_\gamma^{o'\to j}\cup \Rscr_{\delta_0}^{j\to d'}$$
and $e_{j\ell_1 d'}$ is triangulated by $(\Rscr^{o'\to d'}_c, \Rscr^{o'\to d'}_d)$ given by 
$$\Rscr_c^{o'\to d'}=\Rscr_\gamma^{o'\to d'}, \Rscr_d^{o'\to d'}=\Rscr_{\delta_0}^{j\to \ell_1}\cup\Rscr_{\alpha}^{\ell_1\to d'}$$
as can be verified considering $\Rscr_\alpha^{o'\to d'}\cap\Rscr_{\beta}^{o'\to d'}=\{o', d'\}$. Similarly if $\ell_2$ exists outside $\{o', d', j\}$, $e_{o'\ell_2 j d'}\in\Omega_3(\Rsf(\Gscr'))$ as $e_{o'\ell_2 j}$ is triangulated by $(\Rscr_a^{o'\to d'}, \Rscr_b^{o'\to d'})$ given by 
$$\Rscr_a^{o'\to d'}=\Rscr_\gamma^{o'\to d'},$$ $$\Rscr_b^{o'\to d'}=\Rscr_\alpha^{o'\to \ell_2}\cup\Rscr_{\delta_0}^{\ell_2\to d'}$$
and $e_{\ell_2 j d'}$ is triangulated by $(\Rscr_c^{o'\to d'}, \Rscr_d^{o'\to d'})$ given by 
$$\Rscr_c^{o'\to d'}=\Rscr_\alpha^{o'\to d'},$$ $$\Rscr_d^{o'\to d'}=\Rscr_{\delta_0}^{o'\to j}\cup\Rscr_{\gamma}^{j\to d'}.$$
Thus, $\dim\Omega_3(\Rsf(\Gscr'))>0$ and we arrive at a contradiction since $0=\dim\Omega_3(\Rsf(\Gscr))>\dim\Omega_3(\Rsf(\Gscr'))$ following $\Gscr'\subset\Gscr\implies \Omega_3(\Rsf(\Gscr'))\subset \Omega_3(\Rsf(\Gscr))$.
    \subsubsection{Case 2, Part (ii)}\label{app:case2p2thm}
    Suppose that $\nexists\ k\in \Vscr: e_{o'k d'}\in\Omega_2(\Rsf(\Gscr'))$.
If $\Gscr'$ has only two routes, \ie, $A'=\{a, b\}$, then, $\Rscr^{o'\to d'}_a\cap\Rscr^{o'\to d'}_b = \{o', d'\}$ will ensure that $(\Rscr^{o'\to d'}_a, \Rscr^{o'\to d'}_b)$ triangulate some $e_{o' k d'}$ for some $k\in \Rscr_a^{o'\to d'}\cup\Rscr_b^{o'\to d'}$, a contradiction to the presupposition of Case 2. So let $k\in \Rscr^{o'\to d'}_a\cap \Rscr_b^{o'\to d'}$. This allows $\Gscr'$ the serial decomposition
$\Gscr'=\Rscr_a^{o'\to k}\cup\Rscr_b^{o'\to k} \to \Rscr_a^{k\to d'}\cup\Rscr_b^{k\to d'}$.\\
Now let $|A'|>2$. If there is a vertex $k$ common across all routes, \ie,  $\exists k\in\Vscr'$ such that $k\in\Rscr_a^{o'\to d'}$ for all $a\in A'$, $\Gscr'$ admits the serial decomposition $\Gscr'=\bigcup_{A'}\Rscr_a^{o'\to k} \to \bigcup_{A'}\Rscr_a^{k\to d'}$ and we arrive at a contradiction. Now suppose otherwise so that for each $k\in\Vscr'\setminus\{o', d'\}$, there is a route in $\Gscr'$ that excludes $k$. We will show that this supposition contradicts atleast one of $\dim\Omega_3(\Rsf(\Gscr'))=0$ or the supposition of Case 2 \ie $\nexists j: e_{o'jd'}\in \Omega_2(\Rsf(\Gscr'))$. 

We begin by establishing the existence of a pair $j_0, k_0\in \Vscr'$ such that $e_{o' j_0 k_0}\in\Omega_2(\Rsf(\Gscr'))$. Pick an arbitrary $\beta\in A'$ and take $j_0$ as the second vertex in the route $\Rscr_\beta^{o'\to d'}$ \ie $j_0=\min \Rscr_{\beta}^{o'\to d'}$ such that $j_0\neq o'$. Then there is a route $\Rscr_\alpha^{o'\to d'}$ that excludes $j_0$. Take $k_0=\min \Rscr_\alpha^{o'\to d'}\cap \Rscr_{\beta}^{o'\to d'}, k_0>o'$. Then $k_0>j_0$ since $j_0$ is the second smallest vertex in $\Rscr_\beta^{o'\to d'}$. This selection ensures $e_{o'j_0k_0}\in \Omega_2(\Rsf(\Gscr))$ with the triangulating pair $(\Rscr_\alpha^{o'\to d'}, \Rscr_\beta^{o'\to d'})$.

Now let $e_{o' j k}\in\Omega_2(\Rsf(\Gscr'))$ with arbitrary $j, k\in\Vscr'$ with corresponding triangulating pair $(\Rscr_\alpha, \Rscr_\beta)$. By supposition, there is a route $\Rscr_\gamma^{o'\to d'}$ that excludes $k$. \\
-- Let
$\ell_\alpha=\min\Rscr_\alpha^{o'\to d'}\cap \Rscr_\gamma^{o'\to d'}$ such that $\ell_\alpha=o'$ and note that if $\ell_\alpha>k_0$, then $e_{o'k_0\ell_\alpha}\in\Omega_2(\Rsf(\Gscr'))$ with triangulating pair $(\Rscr_\gamma^{o'\to d'}, \Rscr_\alpha^{o'\to d'})$. \\
-- Let
$\ell_\beta=\min\Rscr_\beta^{o'\to d'}\cap \Rscr_\gamma^{o'\to d'}$ such that $\ell_\beta=o'$ and note that if $\ell_\beta>k_0$, then $e_{o'k_0\ell_\beta}\in\Omega_2(\Rsf(\Gscr'))$ with triangulating pair $(\Rscr_\gamma^{o'\to d'}, \Rscr_\beta^{o'\to d'})$. \\
-- If both $\ell_\alpha<k_0$ and $\ell_\beta<k_0$, then the intersection graphs $\Rscr_\gamma ^{o'\to d'}\cap \Rscr_\alpha^{o'\to k_0} $ and $\Rscr_\gamma ^{o'\to d'}\cap \Rscr_\beta^{o'\to k_0} $ contain other vertices in addition to $o'$ so let $p_\alpha$ and $p_\beta$ be the maximal vertices in the above two intersection graphs and note that $p_\alpha<k_0$ and $p_\beta<k_0$ hold by definition.\\
-- Now if $p_\alpha < p_\beta$, then $e_{o'p_\alpha p_\beta k_0}\in \Omega_3(\Rsf(\Gscr'))$ since

-- $e_{o' p_\alpha p_\beta}$ is triangulated by $(\Rscr_\beta^{o'\to d'}, \Rscr_\alpha^{o'\to p_\alpha}\cup \Rscr_\gamma^{p_\alpha\to d'})$.

-- $e_{p_\alpha p_\beta k_0}$ is triangulated by $(\Rscr_\alpha^{o'\to d'}, \Rscr_\beta^{o'\to p_\beta}\cup \Rscr_\gamma^{p_\beta\to d'})$.\\
-- Otherwise if $p_\alpha>p_\beta$, then  $e_{o'p_\beta p_\alpha k_0}\in \Omega_3(\Rsf(\Gscr'))$ since

-- $e_{o' p_\beta p_\alpha}$ is triangulated by $(\Rscr_\alpha^{o'\to d'}, \Rscr_\beta^{o'\to p_\beta}\cup \Rscr_\gamma^{p_\beta\to d'})$.

-- $e_{p_\beta p_\alpha k_0}$ is triangulated by $(\Rscr_\beta^{o'\to d'}, \Rscr_\alpha^{o'\to p_\alpha}\cup \Rscr_\gamma^{p_\alpha\to d'})$.\\

This long line of reasoning thus brings us to the following conclusion. If there is a route for every vertex in $\Vscr'\setminus\{o', d'\}$ that excludes it, 
then there exists atleast one 2-path $e_{o' jk}\in\Omega_2(\Rsf(\Gscr'))$. Further, existence of a 2-path $e_{o' jk}\in\Omega_2(\Rsf(\Gscr'))$ implies one of the following two implications:\\
--  Either $\dim\Omega_3(\Rsf(\Gscr'))\neq 0$ which is a contradiction to the presupposition on $\Gscr'$\\
-- Or there is a pair $ j', k'\in \Vscr'$ with $k'>k$ such that $e_{oj'k'}\in \Omega_2(\Rsf(\Gscr'))$. One can then set $j=j', k=k'$ and inductively run the same line of arguments again to obtain either $\dim\Omega_3(\Rsf(\Gscr'))\neq 0$ or $k'=d'$ which is a contradiction to the presupposition of Case 2.
The proof rests here.
	
\end{document}